\definecolor{steelblue}{RGB}{70, 130, 180}
\newtheoremstyle{mytheoremstyle}% Custom theorem style
  {0pt}% Space above (环境顶部间距)
  {0pt}% Space below (环境底部间距)
  {\normalfont}% Body font (正文设置为普通字体)
  {1em}% Indent amount (段首缩进)
  {\itshape}% Theorem head font (标题设置为斜体字体)
  {.}% Punctuation after theorem head
  {.5em}% Space after theorem head
  {}% Theorem head spec
\theoremstyle{mytheoremstyle}
\newtheorem{myTheo}{Theorem}
\newtheorem{myAssu}{Assumption}
\begin{document}
	\setlength{\abovedisplayskip}{2pt}
	\setlength{\belowdisplayskip}{2pt}
	
	\title{AIGC-assisted Federated Learning for Vehicular Edge Intelligence: Vehicle Selection, Resource Allocation and Model Augmentation}
	
	\author{Xianke Qiang, Zheng Chang,~\IEEEmembership{Senior~Member,~IEEE,} Geyong Min,~\IEEEmembership{Senior~Member,~IEEE}
		
		\thanks{X. Qiang and Z. Chang are with School of Computer Science and Engineering, University of Electronic Science and Technology of China, Chengdu 611731, China. G. Min is with Department of Computer Science, University of Exeter, Exeter, EX4 4QF, U.K. 
	}}
	
	\maketitle
	
	\begin{abstract}
		To leverage the vast amounts of onboard data while ensuring privacy and security, federated learning (FL) is emerging as a promising technology for supporting a wide range of vehicular applications. Although FL has great potential to improve the architecture of intelligent vehicular networks, challenges arise due to vehicle mobility, wireless channel instability, and data heterogeneity. To mitigate the issue of heterogeneous data across vehicles, artificial intelligence-generated content (AIGC) can be employed as an innovative data synthesis technique to enhance FL model performance. In this paper, we propose AIGC-assisted Federated Learning for Vehicular Edge Intelligence (GenFV). We further propose a weighted policy using the Earth Mover's Distance (EMD) to quantify data distribution heterogeneity and introduce a convergence analysis for GenFV. Subsequently, we analyze system delay and formulate a mixed-integer nonlinear programming (MINLP) problem to minimize system delay. To solve this MINLP NP-hard problem, we propose a two-scale algorithm. At large communication scale, we implement label sharing and vehicle selection based on velocity and data heterogeneity. At the small computation scale, we optimally allocate bandwidth, transmission power and amount of generated data. Extensive experiments show that GenFV significantly improves the performance and robustness of FL in dynamic, resource-constrained environments, outperforming other schemes and confirming the effectiveness of our approach.
	\end{abstract}
	
	\begin{IEEEkeywords}
		Artificial intelligence generated content, non-IID, data augmentation, resource allocation, vehicle selection.
	\end{IEEEkeywords}

\section{Introduction}
    In the era of ubiquitous intelligence, deep model training is vital for extracting insights from vehicular data to improve AI-driven services \cite{10056271}. The vast amount of data collected by vehicles provide a strong foundation for developing intelligent in-vehicle services. By leveraging the onboard data, Machine Learning (ML) has demonstrated its potential across various Intelligent Transportation System (ITS) applications \cite{8345672}, including traffic sign classification, object detection, congestion prediction, and more. However, the sensing capabilities and onboard computational power of vehicles remain limited. Additionally, offloading raw data to edge servers introduces significant privacy concerns and demands substantial bandwidth. As a result, there is an urgent need for privacy-preserving, distributed ML solutions in modern vehicular networks to enable higher levels of automation on the road, where vehicles must make swift operational decisions \cite{10121038}.\par 
    Federated Learning (FL) \cite{mcmahan2017communication} offers a promising solution by allowing vehicles to collaboratively train a global model while preserving data privacy. In FL, vehicles transmit only local model updates, rather than raw data, to a Roadside Unit (RSU) or Base Station (BS) \cite{10144680}. This decentralized approach significantly reduces the risk of data breaches, alleviates bandwidth constraints, and enhances the scalability and efficiency of vehicular networks.\par
    % \begin{figure}[tbp]
    %     \centering
    %     \includegraphics[scale=0.65]{./}
    %     \caption{The impacts of data distribution $Dir(\alpha)$ on the training performance.}
    %     \label{fig:nonIID} 
    % \end{figure}
        % \begin{figure}[t]
        %     \centering
        %     \subfloat[Training Loss]{
        %         \includegraphics[width=0.22\textwidth]{}
        %     }
        %     \hfill
        %     \subfloat[Testing Accuracy]{
        %         \includegraphics[width=0.22\textwidth]{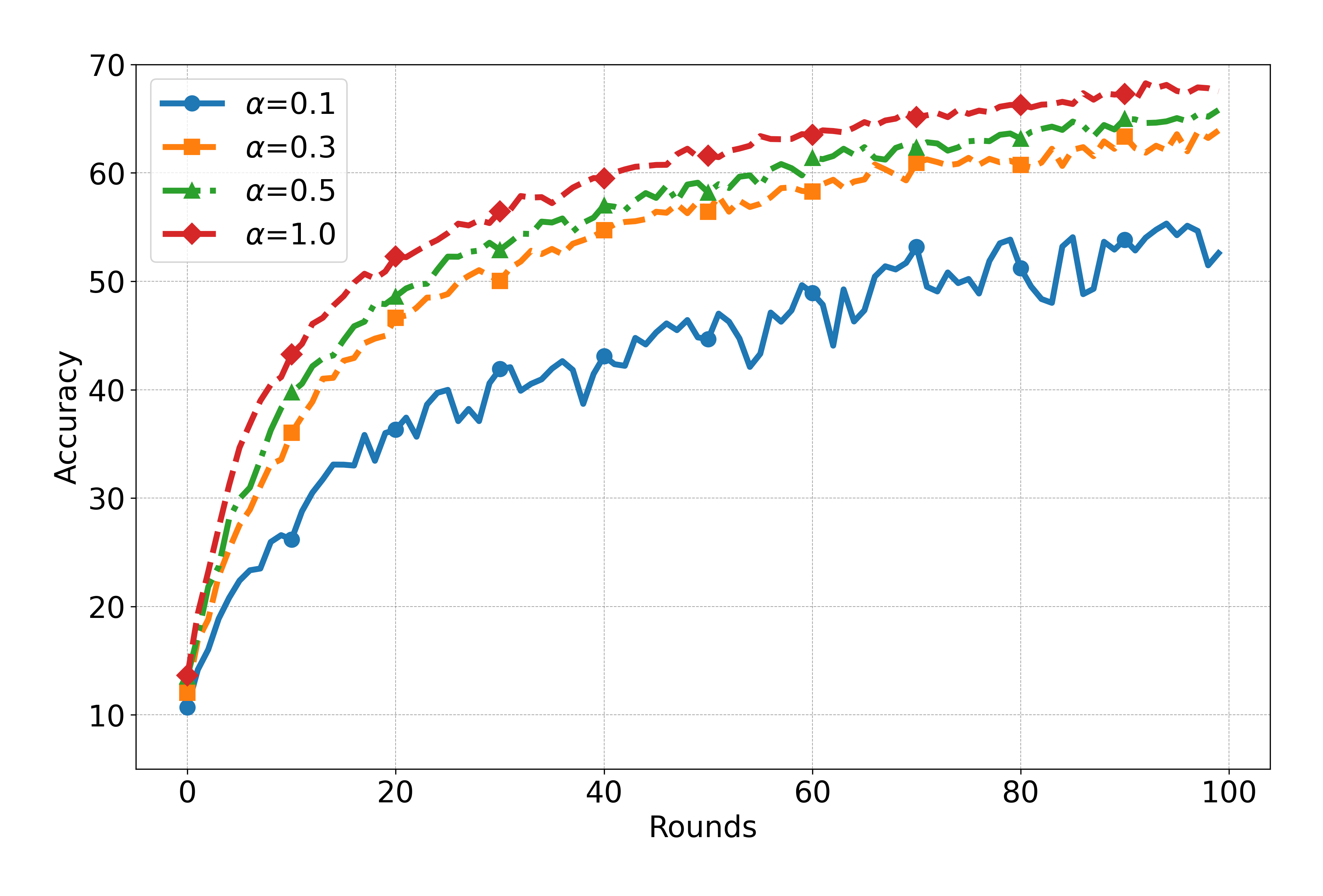}
        %     }
        %     \caption{\color{red}The impacts of data distribution $Dir(\alpha)$ on the training performance.}
        %     \label{fig:nonIID} 
        % \end{figure}

        \begin{figure}[t]
            \centering
            \includegraphics[width=0.40\textwidth]{}
            \caption{The impacts of data distribution $Dir(\alpha)$ on the testing accuracy.}
            \label{fig:nonIID}
        \end{figure}
    Despite its privacy advantages, FL faces significant challenges, particularly in handling non-independently and identically distributed (non-IID) data distributions \cite{li2022federated, zhao2018federated}. As illustrated in Fig. \ref{fig:nonIID}, a well-balanced Dir(1.0) distribution achieves faster convergence and higher accuracy compared to Dir(0.1), underscoring the critical need to address non-IID data issues. These challenges become even more pronounced in vehicular networks, where applications such as image classification and object detection \cite{ma2020artificial, 10313052} must contend with the dynamic nature of participating vehicles and the diversity in their data. The frequent turnover of vehicles and the inherent non-IID characteristics of their data further complicate the training process, making model convergence and performance optimization more challenging.\par
	Effectively addressing non-IID data distribution is crucial for deploying FL in vehicular edge intelligence (VEI). AI-generated content (AIGC) services, such as Stable Diffusion (SD) \cite{rombach2022high}, DALL-E2 \cite{ramesh2022hierarchical}, and Imagen \cite{saharia2022photorealistic}, offer promising solutions by augmenting local datasets. Recent studies have demonstrated the potential of generative AI in mitigating the effects of non-IID data by enriching training datasets \cite{10521828, 10557146, Morafah2024StableDD, ahn2022communication, li2024filling}. However, vehicles are primarily designed for transportation, and their capacity or willingness to generate additional data for FL tasks, particularly in high-mobility scenarios, remains limited. Therefore, it is essential to design a new architecture for AIGC-assisted FL system in a way that minimizes the burden on vehicles, alleviates non-IID issues, and safeguards privacy. However, realizing such a framework involves overcoming several system-level, resource allocation, and data management challenges:
    \begin{itemize}
        \item System-level perspective: \textit{How can we design an AIGC-assisted FL architecture that accommodates vehicle mobility and time-sensitive tasks?} As image generation is resource-intensive and not all vehicles can generate data, the framework must also address connectivity disruptions caused by high mobility to ensure adaptability to time-sensitive tasks in dynamic vehicular environments.
        \item Resource allocation perspective: \textit{How can we efficiently allocate communication and computation resources in AIGC-assisted FL for VEI?} This requires optimizing bandwidth and computational resource allocation while accounting for network latency, vehicle mobility, resource constraints, and communication instability to ensure efficient system performance.
        \item Data management perspective: \textit{How can we determine which labels should be generated in each round, and how much synthetic data is needed? }  An effective strategy must adapt to data heterogeneity, prevent model bias, account for resource constraints, and ensure generated data complements real data to meet global model needs.
    \end{itemize}
	In this paper, we introduce a novel framework called \textbf{AIGC-assisted Federated Learning for Vehicular Edge Intelligence (GenFV)}, which leverages the AIGC to address the non-IID data challenge commonly faced in FL. To the best of our knowledge, this is the first research on AIGC-assisted FL for achieving VEI, with a focus on addressing the non-IID challenges. Through GenFV, we aim to significantly improve the performance and robustness of FL in mobile, resource-constrained environments. The main contributions of this paper are as follows:
	
	\begin{itemize}
		\item  We propose the GenFV framework, which integrates AIGC into FL. By employing a mobility- and data-aware vehicle selection strategy followed by resource allocation, GenFV effectively mitigates data heterogeneity and enhances global model performance. This approach mitigates the impact of vehicle mobility and data heterogeneity on FL model training.
		\item We formulate a delay minimization multi-objective problem that incorporates vehicle selection, bandwidth allocation, transmission power allocation, and data generation. The problem is modeled as a mixed-integer non-linear programming (MINLP) problem, which is NP-hard and challenging to solve directly due to its non-convex nature.
		\item We design a two-scale algorithm to address the formulated delay minimization problem. At the large communication scale, we perform label sharing and vehicle selection. At the smaller computation scale, we decompose and solve three subproblems: bandwidth allocation, transmission power assignment, and model augmentation through data generation.
		\item We evaluate the effectiveness of the proposed GenFV framework through extensive simulations using various open datasets. The results demonstrate the framework's ability to enhance model performance while addressing the challenges of non-IID data, mobility, and resource constraints.
	\end{itemize}
	    The rest of the article is organized as follows. In Section.\ref{section2}, the related works are introduced. Section \ref{section3} presents the architecture of GenFV, followed by weighted policy and convergence analysis. Section \ref{section4} conducts latency-energy analysis and then formulate time minimization problem. Section \ref{section5} provides the problem transformation and solution. Section \ref{section6} evaluates the performance of our proposed solution through extensive simulations. Finally, Section \ref{section7} gives an conclusion.

\section{Related Work}
\label{section2}
	The non-IID challenge of FL was first introduced by \cite{mcmahan2017communication} and subsequently impacted the convergence and performance of the global model \cite{zhao2018federated}. To address this issue, numerous methods have been proposed. One such method, FedProx \cite{li2020federated}, incorporates a proximal term to constrain local updates relative to the global model.  However, these methods do not fundamentally resolve the non-IID problem and face performance bottlenecks in extreme cases with highly imbalanced data distributions.\par
	Data augmentation methods provide an effective approach to address the issue of non-IID data distribution in FL. By utilizing AIGC service, clients in FL can conduct data synthesis to mitigate data heterogeneity issue \cite{li2024filling}. SlaugFL\cite{10521828} is a selective GAN-based data augmentation scheme for federated learning that enhances communication efficiency and model performance by using representative devices to share local class prototypes. IMFL-AIGC\cite{10557146} addresses the challenge of client participation in AIGC-empowered FL by proposing a data quality assessment method and an incentive mechanism that minimizes server costs while enhancing training accuracy. Gen-FedSD\cite{Morafah2024StableDD} utilizes text-to-image models to generate high-quality synthetic data for clients in federated learning, effectively addressing non-IID data distribution challenges. FedDif\cite{ahn2022communication}, a novel diffusion strategy for federated learning that addresses the weight divergence challenge caused by non-IID data by allowing users to share local models via device-to-device communications. In AIGC-assisted federated learning, optimizing the allocation of communication and computation resources is a key challenge, particularly in high-mobility environments. FIMI\cite{li2024filling} formulates an optimization problem aimed at minimizing overall energy consumption on the device side, considering four variables: the number of additional synthesized data for client, computation frequency, sub-bandwidth, and transmission power. However, many studies assume that end devices possess sufficient computing power to generate images. In reality, not all devices have this capability, nor do they necessarily need to create synthetic data. Additionally, given that devices are constantly in motion, it is crucial to consider how to effectively leverage AIGC to enhance data in dynamic scenarios.
    
\section{GenFV: AIGC-assisted Federated Learning for Vehicular Edge Intelligence}
\label{section3}
	In this section, we first introduce GenFV architecture and workflow. Then we introduce the AIGC. Finally, we propose FL weighted policy and convergence analysis.
    \subsection{Architecture and Workflow}
	The proposed GenFV is illustrated in Fig. \ref{fig:Architecture}. We consider a general VEC system that includes one server which is deployed on RSU and a set of vehicles $\{1,2,\dots, V\}$. The set of available vehicles within the communication range of the FL server at round $t$ is denoted by $\mathcal{N}$ which satisfies $\mathcal{N} \subset \mathcal{V}$. The dataset of the vehicle $n$ is denoted as $\mathcal{D}_n = \{\mathcal{X}_n, \mathcal{Y}_n\}$, where $\mathcal{X}_n = \{x_{n}^{1}, x_{n}^2, \dots, x_{n}^{\left|\mathcal{D}_n\right|}\}$ is the training data, $\mathcal{Y}_n = \{y_{n}^1, y_{n}^2,\dots,y_{n}^{\left|\mathcal{D}_n\right|}\}$ represents the corresponding labels, and $\left|\mathcal{D}_n\right|$ is the number of training data samples of vehicle $n$. The GenFV process involves five main steps as follows:\par
    % 首先，所有位于 RSU 通信范围内的车辆都会与 RSU 共享其低隐私元数据，例如发射功率、计算频率，以及非敏感数据标签（如交通分类任务中的限速30码、禁止通行等标签）。这些低隐私元数据的共享主要用于向 RSU 提供车辆的当前状态，为后续的车辆选择和资源分配奠定基础。通过这一过程，车辆向 RSU 提供了关键信息，使其能够制定更加高效且自适应的通信与计算策略。
    % 在接收到车辆共享的标签和相关元数据后，RSU 应用双尺度算法完成车辆选择和资源分配。该算法旨在优化计算和通信资源的分配，同时确定最适合的生成策略，以增强联邦学习模型的性能。此外，RSU 基于共享标签对车辆的数据异质性进行分析，从而更精准地指导生成数据的分布，以改善全局模型的泛化能力。
    % 值得注意的是，交通部门的本地路侧基础设施通常拥有针对任务的完整标签集 \cite{10229176}，而车辆本地的数据往往只包含部分标签。因此，标签共享不仅不会泄露车辆的敏感数据隐私，反而是为了更好地评估和应对车辆数据的异质性。通过这种方式，标签共享机制为动态环境中的联邦学习提供了更可靠的支持，同时确保了数据隐私的保护。
    \subsubsection{Labels Sharing}
    First, all vehicles within the communication range of the RSU share low-privacy metadata with the RSU, such as transmission power, computational frequency, data quality metric, and non-sensitive data labels. Then RSU can either perform text-to-image generation or directly generate images based on a pre-trained or fine-tuned model. Once the RSU receives the labels and associated metadata from the vehicles, it applies a two-scale algorithm for vehicle selection and resource allocation in Section \ref{section5}. Additionally, the RSU leverages the shared labels to evaluate the data heterogeneity across vehicles, enabling more accurate guidance for data generation. It is worth noting that the local roadside infrastructure, such as the RSU of the traffic department, can access to the labels for the task \cite{10229176}. Therefore, label sharing primarily helps the RSU better assess and address data heterogeneity among vehicles without compromising data privacy.  
    \subsubsection{Model Distribution}
    At the start of each training round, the global model is distributed to vehicles as the initial local model. 
    \subsubsection{Local Model Training} 
    After the selected vehicles receive the distributed global model, they start training their local models using their local datasets. 
    \subsubsection{Local Model Upload}  
    After completing local model training, vehicles upload their models to the RSU for aggregation. To enhance transmission efficiency, vehicles employ optimized power and bandwidth allocation strategies, ensuring rapid model uploads with limited communication resources.  
    \subsubsection{Model Augmentation and Aggregation}  
    While vehicles perform local model training, the RSU generates images based on data generation strategy and uses these generated data to train an augmented model. This approach allows the RSU to effectively utilize idle computational resources to improve its model performance during the waiting period for vehicle model uploads. Once the RSU receives the locally updated models from the selected vehicles, the training of the augmented model in RSU is also down. Subsequently, the RSU aggregates the local models with the enhancement model based on the weighted strategy proposed in Section \ref{section3c}, significantly improving the global model's performance and robustness. 
    
    \begin{figure}[t]
        \centering
        \includegraphics[width=0.90\linewidth]{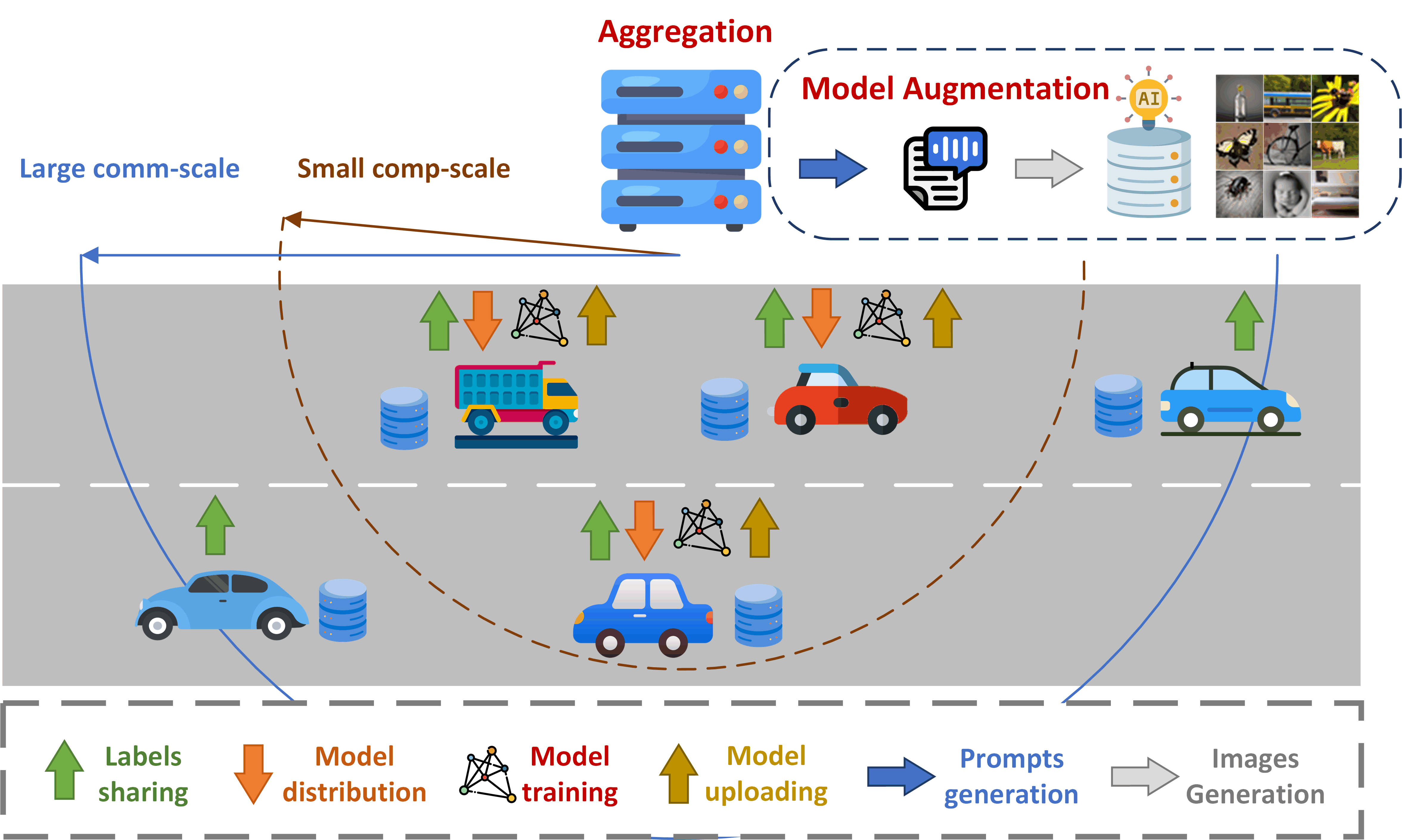}
        \caption{System model.}
        \label{fig:Architecture}
    \end{figure}
    
    \subsection{AIGC Model}
    Diffusion models (DMs) have gained increasing interests in the field of generative modeling. DMs gradually transform a pristine image into pure noise through forward diffusion \cite{ho2020denoising,nichol2021improved}. Subsequently, they learn to reconstruct the original clean image from this pure noise through reverse diffusion. 
    
    In the forward diffusion process, a Gaussian noise with zero mean and variance $\lambda_t$ is slowly added in $T$ time-steps to degrade a clean image $(x_0 \sim q(x_0))$ and ultimately converge into pure Gaussian noise at step $T(x_T)$. The noisy image at each time-step $t$ from the previous time-step $t-1$ can be derived using the following equation:
    \begin{equation}
    q\left(x_t \mid x_{t-1}\right)=\boldsymbol{N}\left(x_t;\sqrt{1-\lambda_t} x_{t-1},\lambda_t I\right),
    \end{equation}
    where $t \in [1,T]$ and $0 \leq \lambda_{1:T} \leq 1$ denote the noise scale scheduling. In the reverse diffusion process, a noise predictor model $\varepsilon_\theta$ is trained to predict the noise $\varepsilon_t$ at each time-step $t$ using the following loss function:
    \begin{equation}
    \mathcal{L}=\mathbb{E}_{t, x_0, \varepsilon}\left[\left\|\varepsilon_t-\varepsilon_\theta\left(x_t, t\right)\right\|_2^2\right].
    \end{equation}

    \subsection{FL Weighted Policy and Convergence Analysis}
        \subsubsection{Weighted Policy}
    	\label{section3c}
    	In this paper, we denote the loss function for each data sample $i$ as $\ell(\boldsymbol{\omega},x_n^{i})$. For each dataset $\mathcal{D}_n$ corresponding to vehicle $n$, the local loss function is defined as $L_n(\boldsymbol{\omega}) = \frac{1}{\left|\mathcal{D}_n\right|} \sum_{i=1}^{\left|\mathcal{D}_n\right|} \ell(\boldsymbol{\omega},x_{n}^i)$. Similarly, the loss function of the augmented model trained by generated images is given by $L_a(\boldsymbol{\omega}) = \frac{1}{\left|\mathcal{D}_a\right|} \sum_{i=1}^{\left|\mathcal{D}_a\right|} \ell(\boldsymbol{\omega},x_a^i)$.\par
        The local model $\boldsymbol{\omega}_n^{t}$ of vehicle $n$ for round $t$ at $t_l-1$ steps update with following rule:
            \begin{equation}
                \boldsymbol{\omega}_n^{t,t_l} =
                \begin{cases}
                    \boldsymbol{\omega}_n^{t,t_l-1} - \eta \nabla L_n(\boldsymbol{\omega}_n^{t,t_l-1}), & \text{if }t_l\mod h \neq 0, \\
                    \boldsymbol{\omega}^{t}, & \text{if }t_l\mod h = 0,
                \end{cases}
                \nonumber
            \end{equation}
        and the AIGC-assistant augmented model $\boldsymbol{\omega}_a^{t}$ trained with AIGC-generated images has the similar updating rule:
            \begin{equation}
                \boldsymbol{\omega}_a^{t,t_l} =
                \begin{cases}
                    \boldsymbol{\omega}_a^{t,t_l-1} - \eta \nabla L_a(\boldsymbol{\omega}_a^{t,t_l-1}), & \text{if }t_l\mod h \neq 0, \\
                    \boldsymbol{\omega}^{t}, & \text{if }t_l\mod h = 0.
                \end{cases}
                \nonumber
            \end{equation}
        where $\boldsymbol{\omega}^{t} = \kappa_1 \sum_{\forall n \in \mathcal{N}^t} \rho_n (\boldsymbol{\omega}_n^{t,h-1}-\eta \nabla L_n(\boldsymbol{\omega}_n^{t,h-1})) + \kappa_2 (\boldsymbol{\omega}_a^{t,h-1}-\eta \nabla L_n(\boldsymbol{\omega}_a^{t,h-1}))$. Here,  $\rho_n = \frac{\left|\mathcal{D}_n\right|}{\sum_{\forall n \in \mathcal{N}^t} \left|\mathcal{D}_n\right|}$, $\kappa_1$ represents the proportion of the FL model, $\kappa_2$ corresponds to the proportion of the AIGC-assistant augmented model, and $\boldsymbol{\omega}_n^{t,0} = \boldsymbol{\omega}_a^{t,0} = \boldsymbol{\omega}^{t-1}$.\par
        To characterize the data quality of vehicle $n \in \mathcal{N}$, we use the criterion $\|\nabla L_n(\boldsymbol{\omega})-\nabla L(\boldsymbol{\omega})\| \leq \lambda_n$, where $\lambda_n$ is noted as the upper bound of the gradient difference between the local loss function and the global loss function. A larger $\lambda_n$ indicates poorer data quality\cite{10557146,ahn2022communication}. Specifically, $\lambda_n$ can be estimated by 
    	\begin{align}
    		&\|\nabla L_n(\boldsymbol{\omega}) - \nabla L(\boldsymbol{\omega})\| \nonumber\\
    		\leq&\sum_{i=1}^{Y} \|[p^n(y=i)-p(y=i)]\mathbb{E}_{x|y=i}l_i(x,\boldsymbol{\omega}) \nonumber\\
    		\leq&EMD_n \cdot g_n =\lambda_n,
              \label{upperbound}
    	\end{align}
    	where $EMD_n = \sum_{i=1}^{Y}\|p_n(y=i)-p(y=i)\|$ and $g_n=max_{i\in Y}\|\mathbb{E}_{x|y=i}l_i(x,\boldsymbol{\omega})\|$ is widely-adopted average earth mover's distance (EMD) as the data distribution heterogeneity among vehicles. For vehicle $n$ with non-IID data distribution, the data synthesis process is performed as follows: for each label $i$, given the proportion of data samples with label $i$ in the global data (denoted as $p(y=i)=\frac{1}{Y}$), where $Y$ is the number of labels). The proportion of data samples with label $i$ in $\mathcal{D}_n$ (denoted as $p_n(y=i)$).\par
        For all participate vehicles, the average data quality is denoted as $\bar{EMD} = \frac{\sum_{n=1}^{\mathcal{N}}EMD_n}{|\mathcal{N}|}$. As shown in (\ref{upperbound}), a larger $EMD_n$ results in a larger upper bound, which in turn leads to poorer local model performance\cite{zhao2018federated}. Therefore, we aim to assign a smaller proportion to Fed model with a larger $\bar{EMD}$ in the weighted policy. Thus, we denote the $\kappa_1 = 1- (\frac{\bar{EMD}}{2})^2$ and $\kappa_2=(\frac{\bar{EMD}}{2})^2$ as the model quality of the Fed model and augmented model respectively.Thus, we denote the $1- (\frac{\bar{EMD}}{2})^2$ as the model quality of the Fed model. The global model update strategy is denoted as
    	{\begin{align}        
    		\boldsymbol{\omega}^{t} =(1- (\frac{\bar{EMD}}{2})^2)\sum_{\forall n \in \mathcal{N}^t}\rho_n \boldsymbol{\omega}_n^{t} + (\frac{\bar{EMD}}{2})^2 \boldsymbol{\omega}_a^{t},
    		\label{modelaugmentation}
    	\end{align}}
        where $\boldsymbol{\omega}_n^{t} = \boldsymbol{\omega}_n^{t,h}$ and $\boldsymbol{\omega}_a^{t} = \boldsymbol{\omega}_a^{t,h}$.

    \subsubsection{Convergence Analysis}
    	To analyze the convergence rate, we first make the assumptions as follows \cite{10557146,9069945}:\par
            \begin{myAssu}
                $L_1, \ldots, L_n$ are all $\beta-Lipschiz$, i.e., for all $\boldsymbol{\omega}$ and $\boldsymbol{v}$, $\|L_n(\boldsymbol{\omega}) - L_n(\boldsymbol{v})\| \leq \beta \|\boldsymbol{\omega}-\boldsymbol{v}\|$.
                \label{assump-1}
            \end{myAssu}
            \begin{myAssu}
                $L_1, \ldots, L_n$ are all $\varrho-Lipschiz$ smooth, i.e., for all $\boldsymbol{\omega}$ and $\boldsymbol{v}$, $\|\nabla L_n(\boldsymbol{\omega}) - \nabla L_n(\boldsymbol{v})\| \leq \varrho \|\boldsymbol{\omega}-\boldsymbol{v}\|$.   
                \label{assump-2}
            \end{myAssu}
            \begin{myAssu}
                $L_n(\boldsymbol{\omega})$ satisfies $\mu-strong$ convex, i.e., for all $\boldsymbol{\omega}$, $\|L_n(\boldsymbol{\omega}) - L_n(\boldsymbol{\omega^*})\| \leq \frac{1}{2\mu} \|L(\boldsymbol{\omega})\|$, where $\boldsymbol{\omega^*}$ is the optimal parameter.
                \label{assump-3}
            \end{myAssu}
            \begin{myAssu}
                The bound of data quality: $\|\nabla L_n(\boldsymbol{\omega})-\nabla L(\boldsymbol{\omega})\| \leq \lambda_n$;  $\|\nabla L_a(\boldsymbol{\omega})-\nabla L(\boldsymbol{\omega})\| \leq \lambda_a$.
                \label{assump-4}
            \end{myAssu}
            \begin{myAssu}
                Let $x_t^n$ present the random sample dataset from the vehicle $n$.  The variance of stochastic gradients in each vehicle is bounded: $\mathbb{E}\left\|\nabla L_n\left(\boldsymbol{\omega}_n, \xi_n\right)-\nabla L_n\left(\boldsymbol{\omega}\right)\right\|^2 \leq$ $\delta_n^2$, for $n=1, \ldots, N$.
                \label{assump-5}
            \end{myAssu}
            \begin{myTheo} Let Assumptions 1 to 5 hold, we assume the global iterations $T$, the set of participate vehicles is denoted as $\mathcal{N} \in \mathcal{V}$. By setting $\eta < \frac{1}{\varrho}$, the convergence upper bound is given as:
            \begin{equation}
                L(\boldsymbol{\omega}(T,Th)) - L(\boldsymbol{\omega}^{*})\leq \chi^{hT}\Theta+(1-\chi^{hT})\psi\Lambda,
            \end{equation}
            where $\Theta = L(\boldsymbol{\omega}(0,0)) - L(\boldsymbol{\omega}^*)$, $\Lambda=\kappa_1 \sum_{n \in \mathcal{N}} \rho_n(\sigma_n+\lambda_n) + \kappa_2 \lambda_a$, $\chi = 1-2\mu\eta+2\mu\varrho\eta^2$, $\psi=\frac{\beta (\eta\varrho+1)^h -1}{\varrho(1+\chi^h)}$.
            \end{myTheo}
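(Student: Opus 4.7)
The plan is to establish a per-step progress inequality first, then propagate it across the $h$ local SGD steps inside one round, then chain the result across the $T$ global rounds; the two-model aggregation (FL part with weight $\kappa_1$ and the AIGC-augmented part with weight $\kappa_2$) will be handled inside the one-step bound so that the final linear recursion already carries the $\Lambda$ error term.

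First, I would fix an arbitrary round $t$ and derive a descent inequality for one local step of \emph{either} model. Starting from Assumption~\ref{assump-2} ($\varrho$-smoothness) I would write $L(\boldsymbol{\omega}^{t,t_l}) \leq L(\boldsymbol{\omega}^{t,t_l-1}) - \eta \langle \nabla L(\boldsymbol{\omega}^{t,t_l-1}), \nabla \tilde{L}\rangle + \tfrac{\varrho\eta^2}{2}\|\nabla \tilde{L}\|^2$, where $\nabla\tilde L$ is either a stochastic vehicle gradient or the augmented gradient. I would decompose $\nabla \tilde{L} = \nabla L + (\nabla L_n - \nabla L) + (\nabla \tilde L_n - \nabla L_n)$, then bound the cross terms using Assumption~\ref{assump-4} (by $\lambda_n$ or $\lambda_a$) and Assumption~\ref{assump-5} (by $\delta_n$, writing it as $\sigma_n$). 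Applying Assumption~\ref{assump-3} ($\mu$-strong convexity) via $\|\nabla L(\boldsymbol{\omega})\|^2 \geq 2\mu (L(\boldsymbol{\omega}) - L(\boldsymbol{\omega}^*))$ converts the squared-gradient term into a contraction in the loss gap, producing a one-step recursion of the form $L^{t,t_l}-L^* \leq (1-2\mu\eta+2\mu\varrho\eta^2)(L^{t,t_l-1}-L^*) + \eta \cdot \text{(noise + bias terms)}$, i.e.\ the contraction factor $\chi$ emerges naturally here, and $\eta<1/\varrho$ keeps $\chi<1$.

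Second, I would unroll this recursion over the $h$ intra-round steps. Because the drift between any local iterate and the round-$t$ starting point grows at rate $(1+\eta\varrho)$ per step under smoothness, summing a geometric series gives a factor $\frac{(1+\eta\varrho)^h -1}{\eta\varrho}$, and using Assumption~\ref{assump-1} ($\beta$-Lipschitz) turns parameter drift into loss drift, producing the $\beta((\eta\varrho+1)^h -1)$ numerator in $\psi$; pairing this with the contraction $\chi^h$ over $h$ steps produces the $(1+\chi^h)$ normalizer. Third, I would handle aggregation: applying (\ref{modelaugmentation}) with $\kappa_1+\kappa_2=1$ and convexity of $L$ (or Lipschitzness of $L$ around $\boldsymbol{\omega}^*$) to push the bound through the weighted sum, the per-vehicle error $\sigma_n+\lambda_n$ is weighted by $\rho_n$ and the augmented-model error $\lambda_a$ enters linearly, which is exactly $\Lambda$. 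Finally I would iterate across $t=1,\dots,T$: the cross-round recursion $A_{t}\leq \chi^h A_{t-1} + (1-\chi^h)\psi\Lambda$ solves explicitly to $\chi^{hT}\Theta + (1-\chi^{hT})\psi\Lambda$, which is the claimed bound.

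The main obstacle I expect is the careful bookkeeping of the two different update streams (vehicle local models vs.\ the AIGC-augmented model) through the aggregation. In particular, the per-step bound must be written in a form whose right-hand side is \emph{affine} in the model-specific bias ($\lambda_n$ or $\lambda_a$) so that after taking the $\kappa_1$-$\kappa_2$ weighted combination one still gets a clean single recursion. A secondary subtlety is that the stated error $\Lambda$ is linear in $\sigma_n$ rather than $\sigma_n^2$, which is not what a naive $\|\cdot\|^2$-based smoothness argument would give, so I would need to absorb a $\sqrt{\cdot}$ or apply Young's inequality before invoking strong convexity, and verify that the resulting multiplicative constants still collapse into the $\psi$ defined in the theorem.
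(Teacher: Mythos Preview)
The paper defers its proof to an appendix that is not included in the provided source, so a line-by-line comparison is impossible. That said, your proposal is the standard route for this class of results and matches the fingerprints of the statement exactly: the contraction factor $\chi=1-2\mu\eta+2\mu\varrho\eta^2$ is precisely what one obtains by combining $\varrho$-smoothness with the $\mu$-PL inequality on a single SGD step, the $(\eta\varrho+1)^h-1$ factor is the usual geometric drift bound over $h$ local steps under smoothness, and the affine structure of $\Lambda$ in $(\sigma_n,\lambda_n,\lambda_a)$ with weights $\kappa_1\rho_n$ and $\kappa_2$ is exactly what falls out of pushing the one-step bound through the convex aggregation (\ref{modelaugmentation}). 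The paper explicitly cites \cite{10557146,9069945} as the basis for its assumptions and analysis, and those references follow the same template, so your plan almost certainly coincides with the intended argument.

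Two small cautions that you already flagged are worth keeping in view. First, the denominator $(1+\chi^h)$ in $\psi$ is nonstandard; the usual unrolling of $A_t\le\chi^h A_{t-1}+c$ gives $(1-\chi^h)$ in the coefficient, so either the paper absorbs an extra factor elsewhere or this is a typo you should not try to force. Second, the linear dependence of $\Lambda$ on $\sigma_n$ (rather than $\sigma_n^2$ from Assumption~\ref{assump-5}) indeed requires a Cauchy--Schwarz or Young step before applying strong convexity; this is routine but must be done before the contraction so that the residual remains affine in the bias, exactly as you note.
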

            \begin{proof}
                The detailed proof is provided in the Appendix.
            \end{proof}

    \begin{figure}[t]
        \centering
        \includegraphics[width=0.90\linewidth]{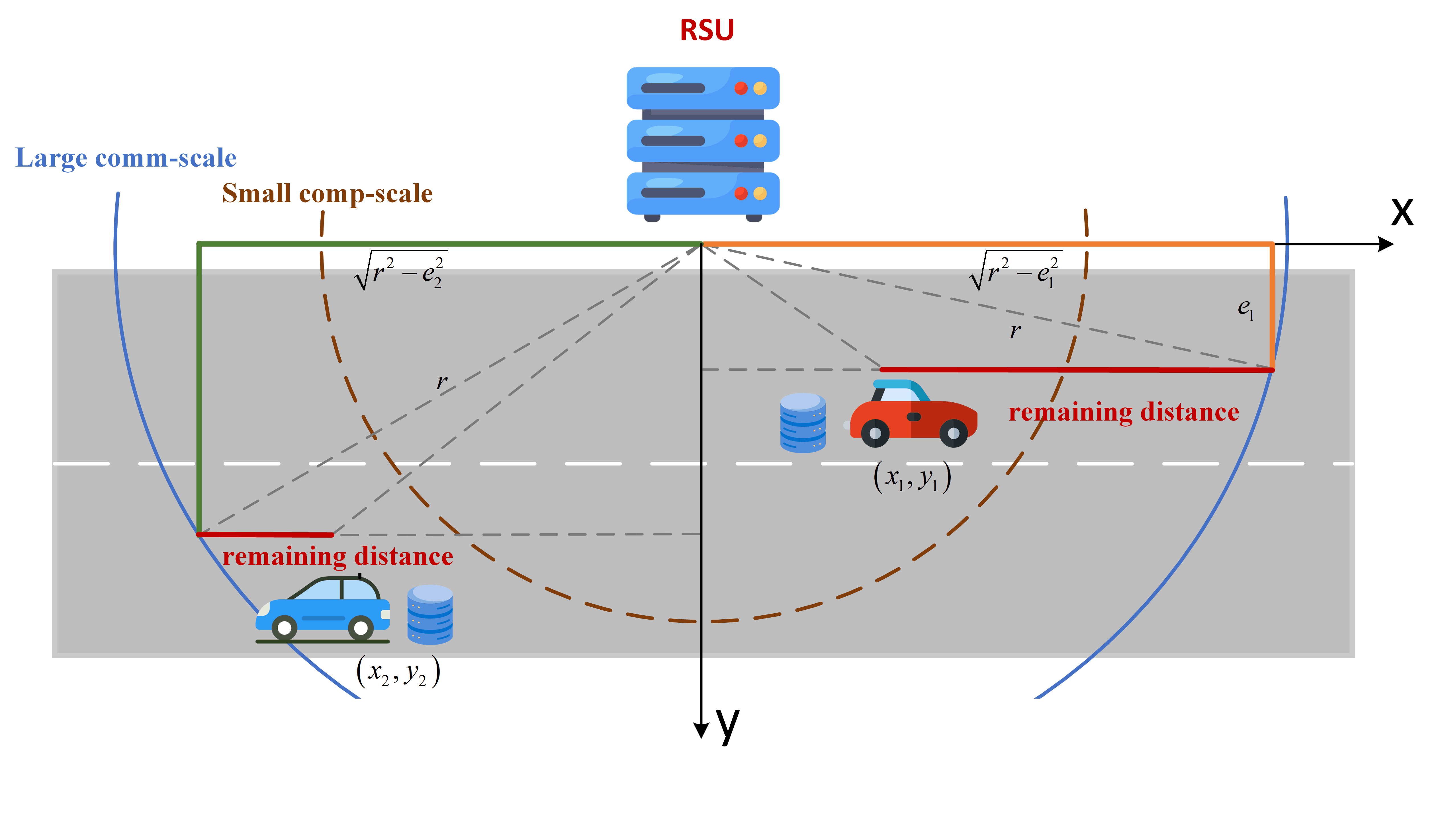} 
        \caption{Vehicle Model}
        \label{fig:vehicle} 
    \end{figure}    

\section{Latency-Energy Analysis and Problem Formulation}
\label{section4}
	In this section, we first analyze the latency and energy of proposed GenFV. Next, we present the decision variables. Finally, we formulate the system time minimization problem.
	
	\subsection{Latency-Energy Analysis}
	\subsubsection{Labels sharing}
	First, all vehicles within the communication range of the RSU share low-privacy metadata with the RSU, such as transmission power, computational frequency, and non-sensitive data labels. Since the amount of data transmitted during this phase is significantly smaller than the model parameters, the associated delay and energy consumption can be considered negligible.
	
	\subsubsection{Model Distribution}
	After the RSU receives the information tuple, the RSU selects vehicles $\mathcal{N}^t$ according to the information and vehicle selection strategy in $t$-th round. Then the RSU broadcasts the model to selected vehicles. We consider the broadcast for downlink transmission during this process. The downlink delay is considered insignificant in comparison to the uplink delay, as presented in, due to the RSU's significantly higher downlink power and the allocation of a larger downlink bandwidth for data distribution \cite{yang2020energy}. 
	
	\subsubsection{Local Model Training}
	GPUs are the most commonly used accelerators for DNN computations \cite{chen2023service},\cite{chen2022energy},\cite{8057205}. We consider the GPUs instead of CPUs in this work for two reasons. First, CPUs cannot support relatively large and complicated model training tasks. Second, GPUs are more efficient than CPUs for on-device DNN training and are increasingly integrated into today’s mobile devices. GPU execution time model is formulated as 
	\begin{equation}
		T_n^{cp} = t_n^0 + \frac{c_1 b_n \theta_n^{mem}}{f_n^{mem}} + \frac{c_2 b_n \theta_n^{core}}{f_n^{core}},
	\end{equation}
	where $t_n^0$ represents the other component unrelated to training task; $\theta_n^{mem}$ and $\theta_n^{core}$ denote the number of cycles to access data from the memory and to compute one mini-batch size of data samples, respectively. For simplicity, we assume that the number of cycles for data fetching and computing scales, $c_1$ and $c_2$. The $b_n$ represents the number of batches. GPU runtime power model is modeled as a function of the core/memory voltage/frequency:
	% $p^{cp} = p^{G0} + \zeta^{mem} f^{mem} + \zeta^{core} (V^{core})^2 f^{core}$,
	\begin{equation}
		p_n^{cp} = p_n^{G0} + \zeta_n^{mem} f_n^{mem} + \zeta_n^{core} (V_n^{core})^2 f_n^{core},
	\end{equation}
	where $p_n^{G0}$ is the summation of the power consumption unrelated to the GPU voltage/frequency scaling; $V_n^{core},f_n^{core},f_n^{mem}$ denote the GPU core voltage, GPU core frequency, and GPU memory frequency, respectively; $\zeta_n^{mem}$ and $\zeta_n^{core}$ are the constant coefficients that depend on the hardware and arithmetic for one training iteration, respectively. With the above GPU power and performance model, the local energy consumed to pass a single mini-batch SGD is the product of the runtime power and the execution time:
	\begin{equation}
		E_n^{cp} = p_n^{cp} \cdot T_n^{cp}.
	\end{equation}
	
	\subsubsection{Model uploading}
	We consider using orthogonal frequency-division multiple access technique as data access method for local gradient uploading. Since the communication resources are limited, we assume each RSU allocates $M$ subcarriers for selected $N$ vehicles ($M<N$) to uploading, and each vehicle can have at most one subcarrier. Therefore, the uplink transmission rate of vehicle $n$ is. 
	\begin{equation}  
		r_n^U = l_n W log_2 (1+\frac{\phi_n h_0 d_n^{-\gamma}}{N_0}),
		\label{eq:r_n^U}
	\end{equation}
	where $l_n = \sum_{m=1}^{M}\beta_{n,m}$ is the number of allocated subcarriers. The subcarrier $m$ is assigned to the vehicle $n$ when $\beta_{n,m}=1$ and otherwise when $\beta_{n,m}=0$. Note that $\beta_{n,m}\in\{0,1\}$ and $\sum_{m=1}^{M}\beta_{n,m}=1$, so $l_n \in\{0,1\}$. $W$ denotes the bandwidth of the uplink channel, $\phi_n$ is the transmission power of vehicle $n$, $h_0$ is the channel gain at a unit distance, $d_n^{-\gamma}$ is the distance between vehicle $n$ and the base station and $\gamma$ is the path loss exponent, $N_0$ is the noise power. Once vehicles finished the model training, vehicles upload their training model to RSU for aggregation. The time delay and energy consumption is defined as:
	\begin{align}
		T_n^{mu} &= \frac{s(\boldsymbol{\omega})}{r_n^U},\\
		E_n^{mu} &= \phi_n \frac{s(\boldsymbol{\omega})}{r_n^U},
	\end{align}
	
	\subsubsection{Model augmentation and Aggregation}
		While the vehicles train their models locally, the RSU generates images. We need to consider both the image generation phase and the augmented model training phase. For image generation phase, we define $d_{m,t}$ to represent the computing capability required for each inference step $t$. The latency of inference executed at RSU can be expressed as follows:
    	\begin{equation}
    		T^{inf}_s = b \sum_{t = 1}^{I} \frac{d_{m,t}}{f_{rsu}} = b^t t_0,
    	\end{equation}
    	where $I$ is the overall inference steps for generating one image and $f_{rsu}$ is the inference capacity of RSU and $t_0 = \sum_{t = 1}^{I}\frac{d_{m,t}}{f_{rsu}}$ means the time delay of generating one image.\par
    	The generated images dataset $\mathcal{D}_s$ ($b_s$ batches) is utilized for training the augmented model. The training delay can be computed as follows:
    	\begin{equation}
    		T_s^{cp} = t_s^0 + \frac{c_s^1 b_s ^t\theta_s^{mem}}{f_s^{mem}} + \frac{c_s^2 b_s^t \theta_s^{core}}{f_s^{core}},
    	\end{equation}	
        Upon receiving the local weight parameter vectors from all devices, along with the augmented model trained by generated datasets in RSU, the RSU aggregates them to compute the new global weights using the weighted strategy proposed in Section \ref{section3c}. Since the RSU typically possesses robust computational capabilities and a stable power supply, the latency of the aggregation phase is relatively low compared to other stages of the process.\par 

    \subsubsection{Overall Latency}
    The overall training latency of vehicle $n$ comprises the local model computation delay and the local model uploading delay, defined as:
	\begin{equation}
		T_n = T_n^{cp} + T_n^{mu}.
		\label{overallT}
	\end{equation}
	
	\subsection{Decision Variables in GenFV}
	In the GenFV, the following decision variables should be determined.
	\subsubsection{Vehicle selection} At each training round, the vehicle selection decision, denoted by $\alpha$, is determined based on the online vehicles' channel conditions, computing capabilities and local datasets heterogeneity, denoted by binary vector $\boldsymbol{\alpha}^t$. Each element $\alpha_n^t$ represent the vehicle selection index at round $t$ for vehicle $n$, constrained by
	\begin{equation}
		\alpha_n^t \in \{0,1\}, \forall n \in \mathcal{N},
		\label{alpha_cons}		
	\end{equation}
	where $\alpha_n^t=1$ indicates that device $n$ is selected; and $\alpha_n^t=0$ otherwise. In each round, the selected vehicles set is denoted as
	\begin{equation}
		\mathcal{N}^t = \boldsymbol{\alpha^t} \mathcal{N}.
	\end{equation}
	\subsubsection{Bandwidth allocation}
	In each training round, we consider the OFDMA for data transmission. Let $\boldsymbol{l}^t$ denote the subcarrier allocation decision. Then each element $l_n^t$ constrined by 
	\begin{equation}
		l_n^t \in \{0,1\}, \forall n \in \mathcal{N}^t,
		\label{bandwidth_cons1}
	\end{equation} 
	where $l_n^t=1$ indicates that device $n$ is selected; and $l_n^t=0$ otherwise. Note that the number of allocated subcarriers should not exceed the overall spectrum capacity, i.e.,
	\begin{equation}
		\sum_{n=1}^{\mathcal{N}^t} l_n^t \leq M.
		\label{bandwidth_cons2}
	\end{equation}
	\subsubsection{Transmission power}
	In each round, the uplink transmission power decision, denoted by $\boldsymbol{\phi}^t$, is decided by channel conditions and participating vehicles. The decision is constrained by 
	\begin{equation}
		\phi_{min} < \phi_n^t <\phi_{max}, \forall n \in \mathcal{N}^t,
		\label{phi_cons}
	\end{equation}
	where $\phi_{min}$ is the minimal value of uplink transmission power, $\phi_{max}$ is the maximal value of $\phi_n$.
	
	\subsubsection{Data generation}
	At each round, the generated image amount decision is denoted as $\boldsymbol{b}^t$. The number of the amount is a natural number, so the constraints is as following:
	\begin{align}
		b^t \in \mathbb{N}.
		\label{generation_cons1}
	\end{align} 
	Besides, RSU generates images while vehicles training local models, and the generating time will not exceed the FL training time. 
	\begin{equation}
		T_s^{inf} + T_s^{cp} \leq \bar{T},\forall n \in \mathcal{N}^t.
		\label{generation_cons2}
	\end{equation}
	
	\subsection{Problem Formulation and Decomposition}
	Our main objective is to minimize the system time delay. We formulate the problem as follows:
	\begin{align}
		\mathcal{P}: \quad &\underset{\boldsymbol{\alpha},\boldsymbol{l}, \boldsymbol{\phi},\boldsymbol{b}}{\min} \; \max\;  T_n \nonumber\\
		\text {s.t. }&E_n^{cp} + E_n^{mu} \leq \bar{E},\forall n \in \mathcal{N}^t,\label{energy_cons}\\ &(\ref{alpha_cons}),(\ref{bandwidth_cons1}),(\ref{bandwidth_cons2}),(\ref{phi_cons}),(\ref{generation_cons1})\text{ and }(\ref{generation_cons2}). \nonumber
	\end{align}
	Constraint (\ref{energy_cons}) guarantees the energy consumption satisfies maximum limit, and (\ref{alpha_cons}), (\ref{bandwidth_cons1}), (\ref{bandwidth_cons2}), (\ref{phi_cons}),  (\ref{generation_cons1}) and  (\ref{generation_cons2}) guarantee feasible decision.\par
	As we can see, $\mathcal{P}$ is a mixed-integer non-linear programming and obviously non-convex, which means $\mathcal{P}$ is NP-hard problem and it is very difficult to be directly solved. Moreover, $\mathcal{P}$ is a stochastic multi-timescale optimization problem. The problem is stochastic because decisions are influenced by the temporal dynamics of device computing capabilities and varying channel conditions during the training process. It is multi-timescale because decisions are made on two different time scales:
	\begin{itemize}
		\item Large Communication-scale (communication level): Decisions such as vehicle selection, made for all online vehicles, happen at a broader communication scale.
		\item Small Computation-scale (computation level): Once vehicles are selected, finer decisions regarding bandwidth allocation, transmission power, and data generation strategies are made for those selected vehicles, adapting to the computation requirements at that scale.
	\end{itemize}
	Additionally, the problem complexity is increased due to the coupling of vehicle selection, bandwidth allocation, transmission power assignment and data generation setting. To address these challenges, we propose a Two-Scale Algorithm. The large-scale algorithm focuses on label sharing and vehicle selection strategy, while the small-scale algorithm handles wireless resource allocation, uplink power management, and data generation strategy.
    {
        \begin{algorithm}[t]
        \caption{Bandwidth Allocation using Lagrange Multiplier Method} 
        \label{alg1}
        \begin{algorithmic}[1]
            \REQUIRE The initial Lagrange multipliers set ($\lambda_{1,n}^{(0)}$,$\lambda_{2}^{(0)}$,$\lambda_{3}^{(0)}$), the step size $\eta_{\lambda_1}$, $\eta_{\lambda_2}$ and $\eta_{\lambda_3}$;
            \REPEAT 
            \STATE Update the multiplier $\lambda_{1,n}^{(k+1)}$ as $\max\{\lambda_{1,n}^{(k)} + \eta_{\lambda_1} \frac{\partial L}{\partial \lambda_1}, 0\}$;
            \STATE Update the multiplier $\lambda_{2}^{(k+1)}$ as $\max\{\lambda_{2}^{(k)} + \eta_{\lambda_2} \frac{\partial L}{\partial \lambda_2}, 0\}$;
            \STATE Update the multiplier $\lambda_3^{(k+1)}$ as $\max\{\lambda_3^{(k)} + \eta_{\lambda_3} \frac{\partial L}{\partial \lambda_3}, 0\}$;
            \STATE Update the optimal $\boldsymbol{l}_n^{k+1}$ according to Eqn. (\ref{optimal_subp2}).
            \UNTIL{Convergence}
        \end{algorithmic}
    \end{algorithm} }       
\section{Proposed Scheme}
\label{section5}
	In this section, we firstly introduce the large communication-scale labels sharing and vehicle selection strategy, and then introduce the small computation-scale resource allocation and model augmentation. 
     {\begin{algorithm}[t]
        \caption{Transmission Power Assignment using SCA Method} 
        \label{alg2}
        \begin{algorithmic}[1]
        \REQUIRE Set of $\boldsymbol{l}_n$, $b^t$, $\alpha_n$, max energy constraints $\hat{E}$, the initial uplink power $\phi_n^0$ of vehicle $n$, iteration round $i=0$, the accuracy requirement $\epsilon$.
        \REPEAT
        \STATE calculate $\hat{e}(\phi_n^i,\phi_n)$ according to Eqn. (\ref{subp3-1});
        \STATE solve $\mathcal{SUBP}2$ by substituting $e(\phi_n^i)$ with $\hat{e}(\phi_n^i,\phi_n)$, substituting $t(\phi_n^i)$ with $\hat{t}(\phi_n^i,\phi_n)$, and achieve the optimal solution $\phi_n^{i,*}$
        \STATE $\phi_n \rightarrow \phi_n^{i,*}$,$i \rightarrow i+1$
        \UNTIL{$\left\|\phi_n^i-\phi_n^{i-1}\right\| \leq \epsilon$}
        \ENSURE	Optimal transmission power $\phi_n^*$.
        \end{algorithmic}
        \end{algorithm}} 
    \subsection{Large Communication-scale Labels Sharing and Vehicle Selection}
	We refer to the RSU communication range as the large communication scale. The process is structured in two phases: initially, we implement the label-sharing strategy, followed by the vehicle selection strategy.
	\subsubsection{Labels Sharing Strategy} 
	During the label-sharing stage, vehicles within the RSU's communication range send their low-privacy information along with the corresponding labels and data quality metrics $EMD_n^t$  of their local datasets, denoted as $EMD_n^t = \sum_{i=1}^{Y}\|p_n^t(y=i)-p(y=i)\|$. Due to the small size of the labels, the transmission delay is considerably shorter compared to that of the models. This efficiency enables interaction with a larger number of vehicles, thus obtaining a more diverse set of data labels.
        \subsubsection{Vehicle Selection Strategy}\label{section-vehicleselection} In this subsection, we address the vehicle selection subproblem. The vehicle selection problem is formulated as:
	\begin{align}
		\label{SUBP1} \mathcal{SUBP}1:& \underset{\boldsymbol{\alpha}}{\min \max}  \quad T_n \nonumber\\
		\text{s.t. }&(\ref{alpha_cons}).
	\end{align}
	As shown in Fig. \ref{fig:vehicle}, we assume that the arrival of vehicles within each RSU range follows the widely used Poisson distribution, and that the vehicles' average speed is related to the degree of curvature on the road \cite{xing2022secure,zheng2024mobility}. We thus have the average speed (km/h) $\Bar{v}_m$ of $\mathcal{N}^t$ vehicles in the service range of RSU as
	\begin{equation}
		\bar{v}_n =  \max{\{v_{max}(1-\frac{M}{M_{max}}),v_{min}\}},
	\end{equation}
	where $v_{max}$ is the maximum vehicle speed that can be driven within the server range of RSU. We assume roads in the RSU service range are uniform and have the same permissible maximum vehicle speed. Similarly, $v_{min}$ is the vehicle speed when the road is congested. Further, $M_{max}$ is the maximum allowable number of vehicles in RSU's server range on the road. In the case of free-flow traffic conditions, the speed of a vehicle $n$ in the service range of RSU, $v_{n}$ is a normally distributed random variable with the probability density function given by $f(v_n) = \frac{1}{\sqrt{2 \pi \sigma}} e^{-\frac{(v_n-\bar{v}_n)}{2\sigma^2}}$, where $\sigma = k\bar{v}_m$ and $v_{min} = \bar{v}_m - l\bar{v}_m$. The parameters $k,l$ is subject to the traffic activity observed in real-time.\par
     
	We define $\boldsymbol{v}_n$ as velocity of vehicle $n$ and $s_n$ as the remaining distance of the vehicle $n$ before running out of the coverage of the RSU \cite{10007714}, which is formulated by
	\begin{equation}
		s_n = \sqrt{r^2 - e^2} - \frac{\boldsymbol{v}_n}{v_n}x_n,
	\end{equation}
	where $r,e$ and $\sqrt{r^2-e^2}$ respectively represent the RSU cover radius, the vertical distance of the RSU to the road, and half of the coverage length of the RSU on the road. $\frac{v_n}{|v_n|}$ denote the running direction of the vehicle $n$. Thus, we have $t_n^{hold}$ the holding time that the vehicle $n$ can hold the V2R communication between it and the RSU before the vehicle runs out of the coverage of the RSU, which is given by 
	\begin{equation}
		t_n^{hold} = \frac{s_n}{v_n}.
	\end{equation} 		
	The vehicles on the road are at varying speeds and the wireless conditions are highly dynamic. Therefore, the diverse holding time for vehicles in the RSU’s coverage makes a great impact on the FL training accuracy. To ensure that selected vehicles can complete the task before leaving the RSU's coverage, we have
	\begin{align}
		\bar{T}_n = \min(t_n^{hold}, t^{max}),
	\end{align}
	where $t^{max}$ is the max time of training one round. And the time constraint is
	\begin{equation}
		T_n^{cp}+T_n^{mu} \leq \bar{T}_n, \forall n \in \mathcal{N}.
		\label{time_cons}
	\end{equation}
	With smaller $EMD$, with higher data quality\cite{zhao2018federated}. To avoid the high data heterogeneity, we also consider to select vehicles those have smaller $EMD$ value:
	\begin{equation}
		EMD_n^t \leq \hat{EMD}, \forall n \in \mathcal{N},
		\label{emd_cons}
	\end{equation}
	where $\bar{EMD}$ is the maximum tolerance for data quality.
	Involving all possible vehicles in the following training, we selected vehicles using the constraints Eqn. (\ref{time_cons}) and Eqn. (\ref{emd_cons}). So, the selected indicator is
	\begin{equation}
		\boldsymbol{\alpha^t} = \{ \alpha_n^t = 1 | (\ref{time_cons}) \wedge (\ref{emd_cons})\}.
	\end{equation}
	After selecting the vehicles, we implement the small computation-scale resource allocation and model augmentation algorithm.
       
   	{\begin{algorithm}[t]
            \caption{Joint Two-Scale Algorithm} 
            \label{alg-twoscale}
            \begin{algorithmic}[1]
                \REQUIRE Set training round $t=0$, and initialize $\epsilon_1,\epsilon_2,\epsilon_3>0$.
                \REPEAT
                \STATE \textbf{Large Communication Scale:}
                \STATE For each vehicle $n \in \mathcal{N}$, share labels and their corresponding $EMD_n^t$.
                \STATE Select vehicles by solving $\mathcal{SUBP}$1.
                \STATE \textbf{Small Computation Scale (Iteration Algorithm):}
                \STATE For each vehicle $n \in \mathcal{N}^t$, set BCD iteration $i=0$, and initialize $\boldsymbol{l}_n^{0}, \phi_n^{cm,0}, b_g^{0}$. 
                \REPEAT 
                \STATE $i =  i + 1$;
                \STATE Compute the optimal bandwidth allocation $\boldsymbol{l}_n^{i}$ given $\phi_n^{cm,i-1}$ and $b_g^{i-1}$ for all $n \in \mathcal{N}^t$ by solving $\mathcal{SUBP}$2 in Algorithm \ref{alg1}.
                \STATE Choose the optimal transmission power $\phi_n^{cm,i}$ given $b_g^{i-1}$ and $\boldsymbol{l}_n^{i}$ by solving $\mathcal{SUBP}$3 in Algorithm \ref{alg2}.
                \STATE Calculate the optimal number of generated images $b_g^{i}$ given $\phi_n^{cm,i}$ and $\boldsymbol{l}_n^{i}$ using Eqn. (\ref{eq-bgt}) of $\mathcal{SUBP}$4.
                \UNTIL{$\Vert \boldsymbol{l}_n^{i}-\boldsymbol{l}_n^{i-1}\Vert<\epsilon_1$, $\Vert \phi_n^{i}-\phi_n^{i-1}\Vert<\epsilon_2$, and $\Vert b_g^{i}-b_g^{i-1}\Vert<\epsilon_3$}
                \STATE \textbf{Model training and augmentation:}
                \FOR{each $n \in \mathcal{N}^t$}
                \STATE Vehicle $n$ trains its local model on the local dataset, then uploads the local model to the RSU.
                \ENDFOR
                \STATE While vehicles are training local models, the RSU generates images. Upon receiving the local models, the RSU aggregates them and uses the generated images to augment the model using Eqn. (\ref{modelaugmentation}).
                \UNTIL{Model Convergence}
            \end{algorithmic}
            \end{algorithm}}
    \subsection{Small Computation-scale Resource Allocation and Model Augmentation}
        \subsubsection{Problem Transformation}
    	To simplify the solution process and avoid the complexity of the min-max problem, we introduce a latency tolerance $\bar{T}$ as an upper bound, represented as
    	\begin{equation}
    		\bar{T} =  \max_{n=1}^{\mathcal{N}^t} \{T_n^{cp} + T_n^{mu}\}.
    		\label{barT}
    	\end{equation}
    	The small computation-scale problem $\mathcal{P'}$ is defined as :
    	\begin{align}
    		\mathcal{P'}:\quad & \underset{\boldsymbol{l}, \boldsymbol{b},
    			\boldsymbol{\phi}}{\min} \quad \bar{T}\nonumber\\
    		\text{s.t. }&T_n^{cp}+T_n^{mu} \leq \bar{T},\forall n \in \mathcal{N}^t,\label{bartime_cons}\\
    		&(\ref{alpha_cons}),(\ref{bandwidth_cons1}),(\ref{bandwidth_cons2}),(\ref{phi_cons}),(\ref{generation_cons1}),(\ref{generation_cons2})\text{and}(\ref{energy_cons}). \nonumber
    	\end{align}
    	To solve $\mathcal{P}'$, we decompose $\mathcal{P}'$ into three subproblem, optimal bandwidth allocation problem  $\mathcal{SUBP}2$, optimal transmission power assignment problem $\mathcal{SUBP}3$ and data generation problem $\mathcal{SUBP}4$.
        \subsubsection{Optimal Bandwidth Allocation}
    	For simplicity, we reformulate (\ref{bartime_cons2}) to replace (\ref{bartime_cons}) and substitute  (\ref{energy_cons2}) for (\ref{energy_cons}). Consequently, the subproblem $\mathcal{SUBP}2$ can be expressed as follows:
    	\begin{align}
    		\mathcal{SUBP}2:&\underset{\{l_n\}_{n \in \mathcal{N}^t}}{\min} \bar{T} \nonumber\\
    		\text {s.t. }&(\ref{bandwidth_cons1}),(\ref{bandwidth_cons2}),\nonumber\\& A + \frac{B}{l_n}\leq \bar{T}, \forall n \in \mathcal{N}^t,\label{bartime_cons2}\\
    		& C+\frac{D}{l_n} \leq \bar{E}, \forall n \in \mathcal{N}^t.\label{energy_cons2}
    	\end{align}
        Here, to enhance readability, we introduce the notations $A, B, C,$ and $D$ to replace complex expressions: $A = t_n^0  + \frac{c_1 b_n \theta_n^{mem}}{{f_n^{mem}}} \nonumber  + \frac{c_2 b_n \theta_n^{core}}{{f_n^{core}}}$, $B = \frac{s(\boldsymbol{\omega})}{W log_2 (1+\frac{\phi_n h_0 d_n^{-\gamma}}{N_0})}$, $C =  \{p_n^{G0} + \zeta_n^{mem} {f_n^{mem}} + \zeta_n^{core} (V_n^{core})^2 {f_n^{core}}\} \cdot \{t_n^0 + \frac{c_1 b_n \theta_n^{mem}}{{f_n^{mem}}} + \frac{c_2 b_n \theta_n^{core}}{{f_n^{core}}}\}$ and $D = \phi_n \frac{s(\boldsymbol{\omega})}{W log_2 (1+\frac{\phi_n h_0 d_n^{-\gamma}}{N_0})}$.\par
        The subproblem $\mathcal{SUBP}2$ is integer linear programming (ILP) problem, which is NP-hard. As mentioned before, there are only $M$ subcarriers for selected $N$ vehicles. The index of vehicles which is assigned the subcarrier $n$ at the $t$-th training round is denoted as $\mathbf{I}(n)$. Then, the expected number of subcarriers $l_n$ can be expressed as:
    	\begin{equation}
    		\boldsymbol{l}_n = \mathbb{E} \left[ \sum_{n=1}^{N}\mathbf{1}(n \in \mathbf{I}(n)) \right]
    	\end{equation}
    	Clearly, if $\alpha_n^t=0$, i.e. vehicle $n$ is not selected in round $t$, then it is preferred not to allocate any bandwidth to this vehicle. On the other hand, if $\alpha_n^t=1$, then we require that at least a minimum bandwidth $\boldsymbol{l}_{min}$ is allocated to vehicle $n$. In addition, a close-to-zero bandwidth allocation will require an extremely high transmit power and hence result in an extremely high energy consumption to achieve a target transmission rate\cite{9237168}. To make it possible, we assume $\boldsymbol{l}_{min} \leq \frac{1}{N_t}$.\par
	Applying Lagrange multiplier method, we can derive
    	\begin{align}
    		L = \bar{T}+&\sum_{n \in \mathcal{N}^t}\lambda_{1,n} (A+\frac{B}{\boldsymbol{l}_n}-\bar{T}) \nonumber \\+& \lambda_{2} \sum_{n \in \mathcal{N}^t} (C+\frac{D}{\boldsymbol{l}_n}-\bar{E})+\lambda_{3}(\sum_{n \in \mathcal{N}^t}\boldsymbol{l}_n-W),
    	\end{align}
    	where $\lambda_{1,n}$, $\lambda_{2,n}$ and $\lambda_3$ are the Lagrange multipliers related to constraints C1 and C3.\par
	   Applying KKT conditions, we can derive the necessary and sufficient conditions in the following,
    	\begin{equation}
    		\frac{\partial L}{\partial \boldsymbol{l}_n} =  \lambda_{1,n} (-\frac{B_n}{\boldsymbol{l}_n^2}) + \lambda_{2}(-\frac{D_n}{\boldsymbol{l}_n^2})+\lambda_3
    	\end{equation}
	Finally, the optimal frequency of $\boldsymbol{l}_n^*$ is 
    	\begin{equation}
    		\boldsymbol{l}_n^* = \sqrt{\frac{\lambda_{1,n} B_n + \lambda_{2} D_n}{\lambda_3}}.
    		\label{optimal_subp2}
    	\end{equation}
	The algorithm is presented in Algorithm \ref{alg1}.
            \begin{figure}[t]
                \centering
                \includegraphics[width=0.90\linewidth]{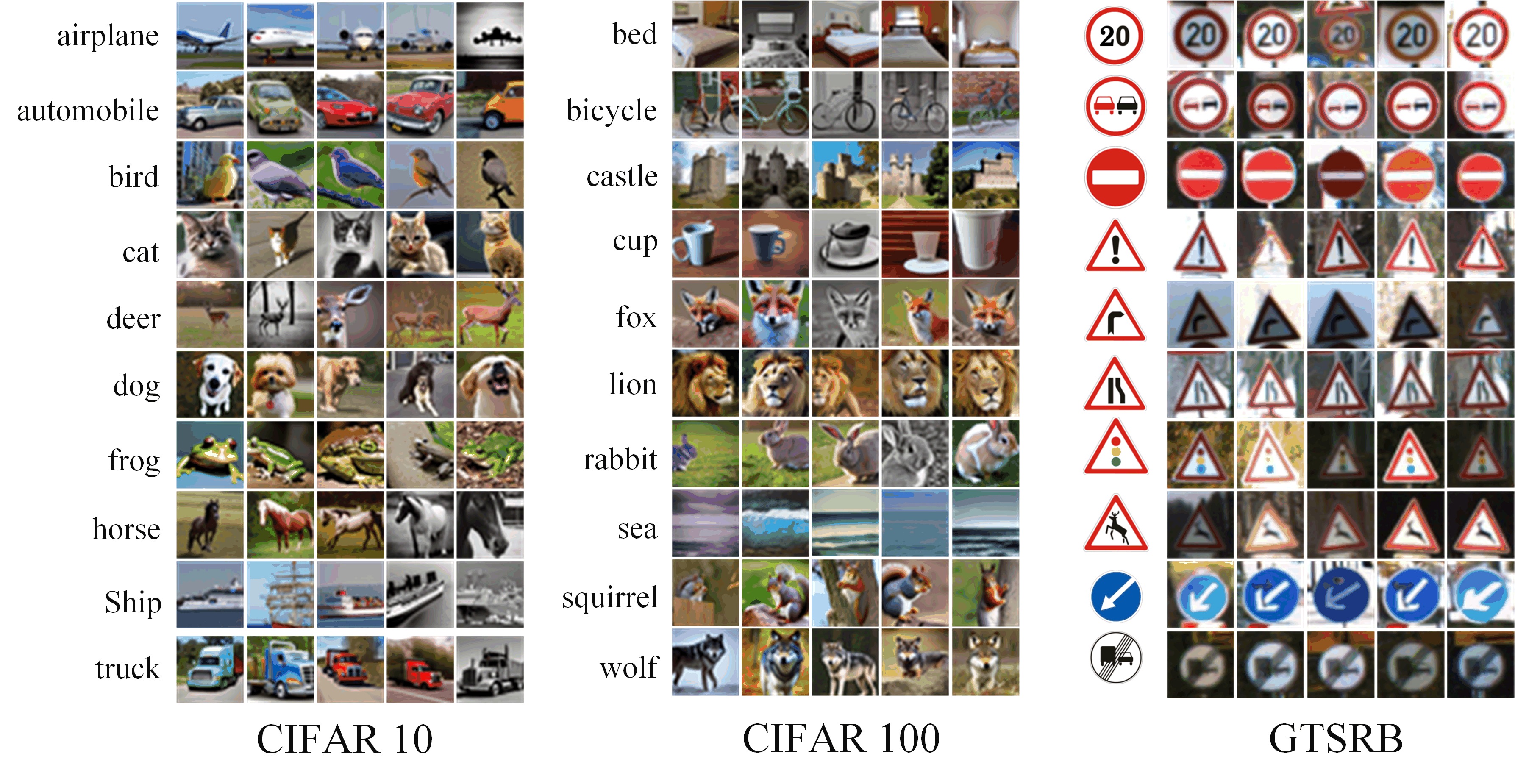}
                \caption{Some generated images.}
                \label{fig:some_generated_images}
            \end{figure}   

        \subsubsection{Optimal Transmission Power Assignment}
        For simplicity, we reformulate (\ref{bartime_cons3}) to replace (\ref{bartime_cons}) and substitute constraint (\ref{energy_cons3}) for (\ref{energy_cons}). Consequently, the subproblem $\mathcal{SUBP}3$ can be expressed as follows:
    	\begin{align}
    		\mathcal{SUBP}3:\quad&\underset{\boldsymbol{\phi}}{\min}\quad \bar{T} \nonumber \\
    		\text {s.t. }
    		&(\ref{phi_cons}),\nonumber\\
    		& A + \frac{s(\boldsymbol{\omega})}{Wlog_2(1+\frac{{ \phi_n} h_0d_n^{-\gamma}}{N_0})}<=\bar{T}, \forall n \in \mathcal{N}^t,\label{bartime_cons3}\\
    		& G+{\phi_n} \frac{s(\boldsymbol{\omega})}{W log_2 (1+\frac{ { \phi_n} h_0 d_n^{-\gamma}}{N_0})} \leq \bar{E}, \forall n \in \mathcal{N}^t, \label{energy_cons3}
    	\end{align}
    	where $G = \{ p_n^{G0} + \zeta_n^{mem} f_n^{mem} + \zeta_n^{core} (V_n^{core})^2 f_n^{mem}\} \cdot \{t_n^0 + \frac{c_1 b_n \theta_n^{mem}}{f_n^{mem}} + \frac{c_2 b_n \theta_n^{core}}{f_n^{mem}}\}$.\par
    	It can be observed that the constraints (\ref{bartime_cons3}) and (\ref{energy_cons3}) is non-convex for the upload power $\phi_n$, which makes $\mathcal{SUBP}3$ a uni-variate non-convex optimization problem.\par
    	We define $\phi_n^{i}$ as the upload power of vehicle $n$ in $i$-th iteration. We denote the non-convex part of (\ref{bartime_cons3}) as 
    	\begin{equation}
    		t(\phi_n)= \frac{s(\omega)}{Wlog_2(1+\frac{\phi_nh_0d_n^{-\gamma}}{N_0})}.
    	\end{equation}
    	To obtain the approximate upper bound, $t(\phi_n)$ can be approximated by its first-order Taylor expansion $\hat{t}(\phi_n^{i},\phi_n)$ at point $p_n^{i}$, which is given by:
    	\begin{equation}
    		\hat{t}(\phi_n) = t(\phi_n^{i}) + t'(\phi_n^{i})(\phi_n-\phi_n^{i}),
    	\end{equation}
    	where $t'(\phi_n^{i})$ is denoted as the first-order derivative of $t(\phi_n^{i})$ at point $\phi_n^{i}$:
    	\begin{align}
    		t'(\phi_n^{i}) = \frac{-A'B'\ln2}{(1+B'\phi_n^i)(\ln(1+B'\phi_n^i))^2}.
    	\end{align}
    	where $A'=\frac{s(\omega)}{w_n}$ and $B'=\frac{h_0 d_n^{-\gamma}}{N_0}$. The problem is convex at each SCA iteration by changing $t(\phi_n)$ to $\hat{t}(\phi_n^i,\phi_n)$ in $\mathcal{SUBP}3$. Then Eqn. (\ref{bartime_cons3}) can change into $ \hat{t} \leq \bar{T}$ to be convex. \par
	   Same to (\ref{energy_cons3}), we denote the non-convex part of (\ref{energy_cons3}) as 
    	\begin{equation}
    		e(\phi_n)= \phi_n \times \frac{s(\omega)}{W_n log_2(1+\frac{\phi_n h_0 d_n^{-\gamma}}{N_0})}.
    	\end{equation} 
    	To obtain the approximate upper bound, $e(\phi_n)$ can be approximated by its first-order Taylor expansion $\hat{e}(\phi_n^{i},\phi_n)$ at point $\phi_n^{i}$, which is given by:
    	\begin{equation}
    		\hat{e}(\phi_n) = e(\phi_n^{i}) + e'(\phi_n^{i})(\phi_n-\phi_n^{i}),
    		\label{subp3-1}
    	\end{equation}
    	where $e'(\phi_n^{i})$ is denoted as the first-order derivative of $e(\phi_n^{i})$ at point $\phi_n^{i}$:
    	\begin{align}
    		\label{subp3-2}
    		e'(\phi_n^{i}) = &\frac{A'}{\log_2(1+B'\phi_n^{i})} \nonumber \\
    		-&\frac{A'B'\phi_n^{i}}{\ln2 (1+B'\phi_n^{i}) (\log_2(1+B'\phi_n^{i}))^2}.
    	\end{align}
    	The problem is convex at each SCA iteration by changing $e(\phi_n)$ to $\hat{e}(\phi_n^i,\phi_n)$ in $\mathcal{SUBP}3$. Then (\ref{energy_cons3}) can change into $ \hat{e} \leq \hat{E}$ to be convex. So we can obtain the optimal uplink power $\phi_n^{i,*}$ using Algorithm \ref{alg2}. 
	    \begin{figure*}[t]
                \centering
                \subfloat[CIFAR10]{
                    \includegraphics[width=0.32\textwidth]{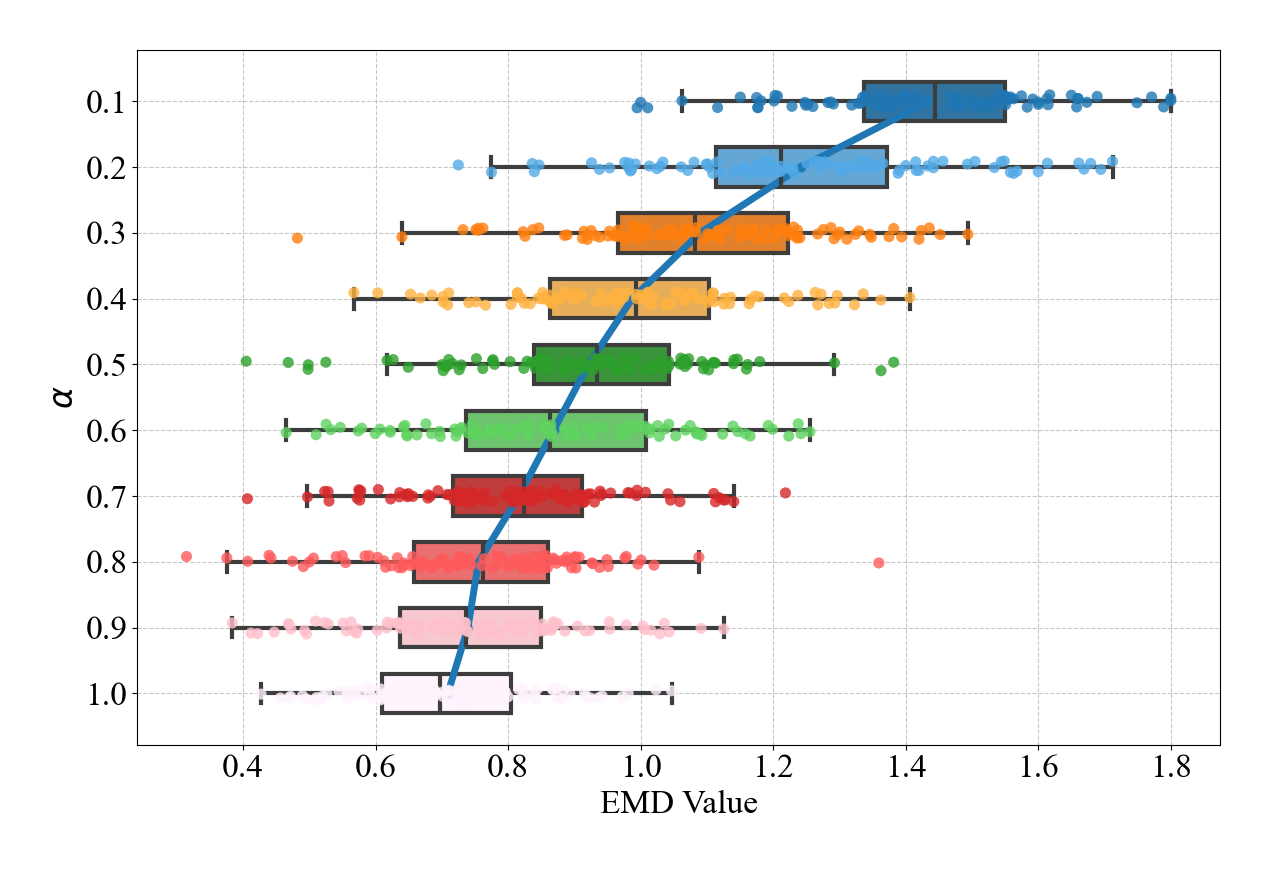}
                }
                \subfloat[CIFAR100]{
                    \includegraphics[width=0.32\textwidth]{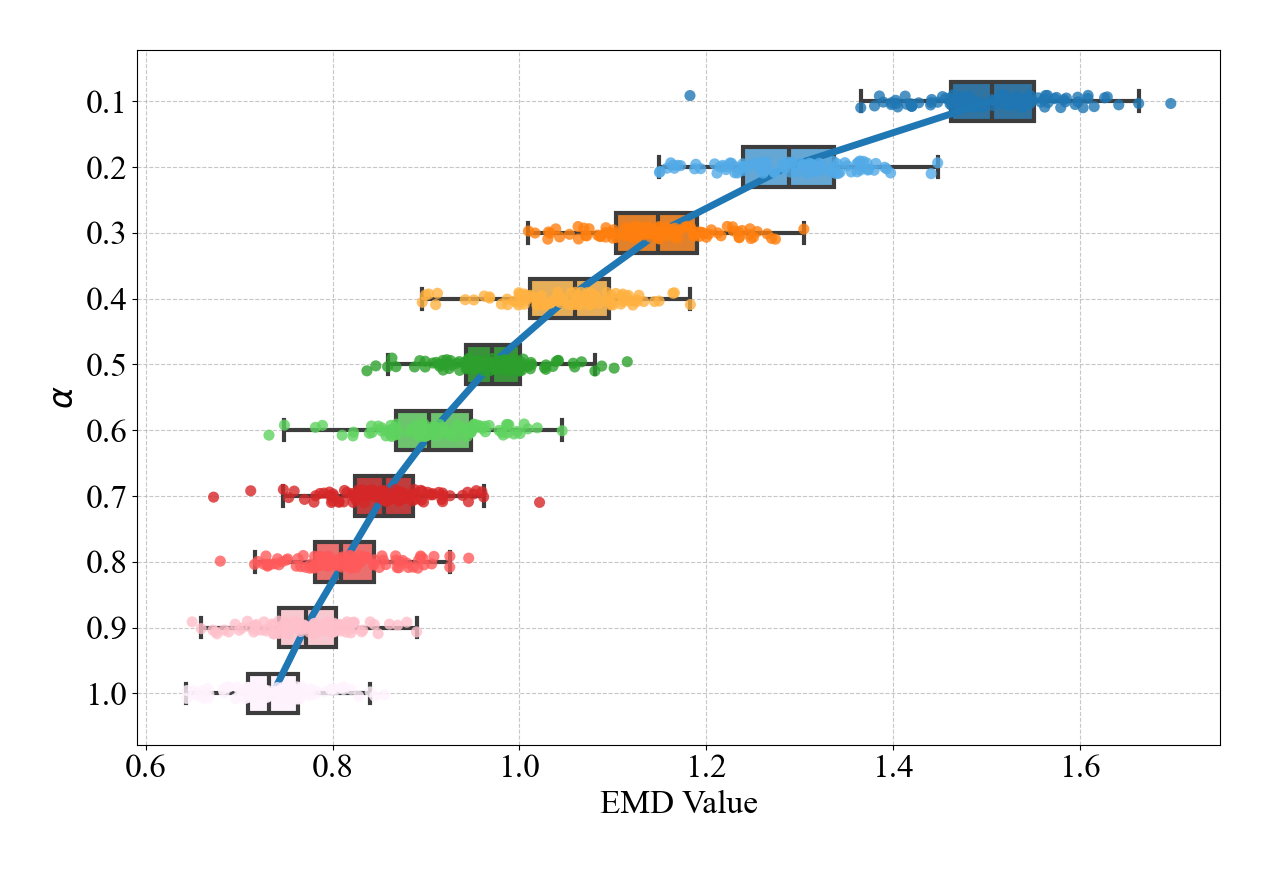}
                }
                \subfloat[GTSRB]{
                    \includegraphics[width=0.32\textwidth]{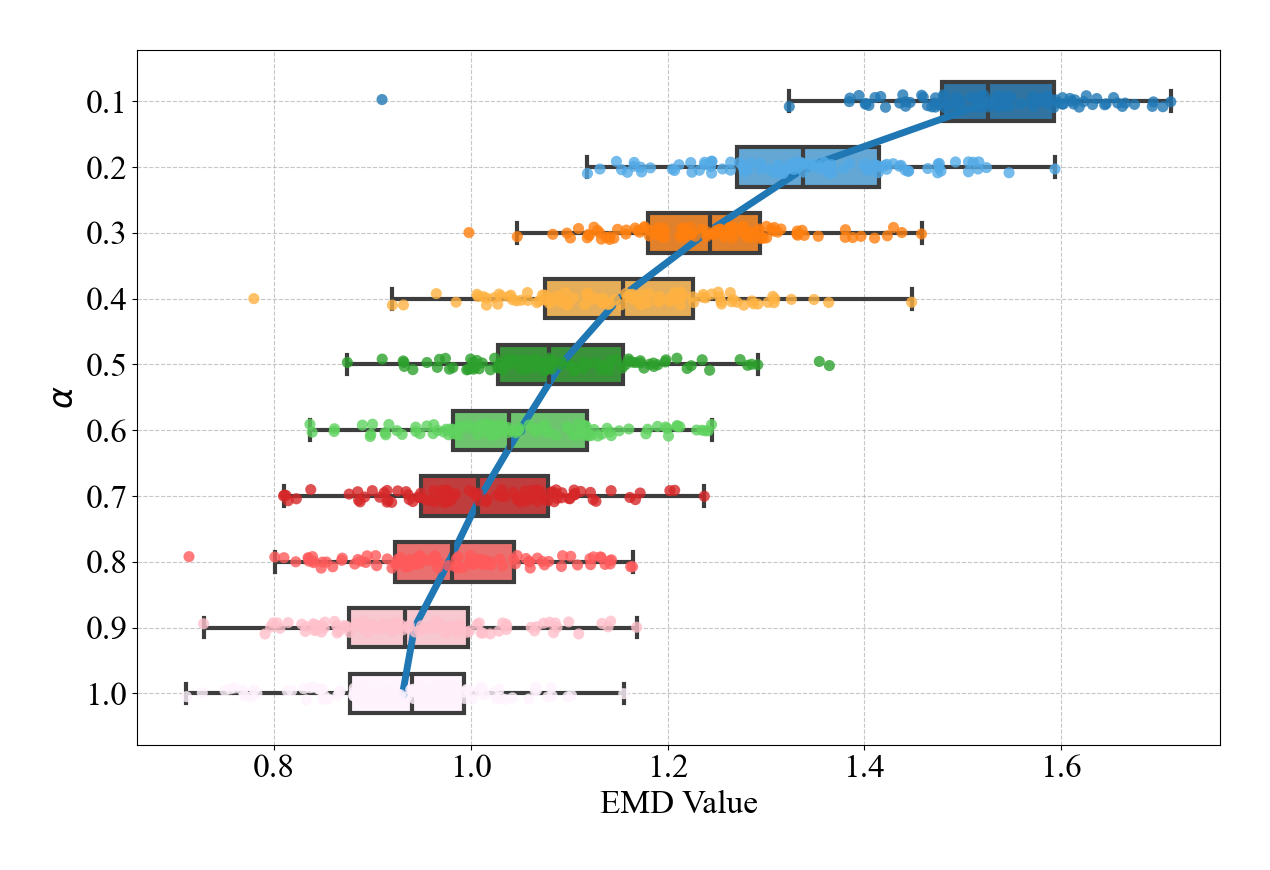}
                }
                \caption{EMD value with varying Dirichlet distribution $\alpha$.}
                \label{fig:emd}
            \end{figure*}
        \subsubsection{Data Generation Strategy}
    	\begin{align}
    		\mathcal{SUBP}4:\quad&\underset{\boldsymbol{b}}{\min}\quad \bar{T}\nonumber \\
    		&\text {s.t. } (\ref{generation_cons1}),(\ref{generation_cons2}).
    	\end{align}
    	According to ($\ref{barT}$), the maximum time $\bar{T} = max_{n \in \mathcal{N}^t}\{T_n^{cp}+T_n^{mu}\}$. Combine the conditions  (\ref{generation_cons1}) and ($\ref{generation_cons2}$), we can derive the optimal solution as
    	\begin{equation}
    		\lfloor\frac{\max_{n \in \mathcal{N}^t}{\{T_n^{cp}+T_n^{mu}\}}-T_s^{cp}(\boldsymbol{b}^{t-1})}{\sum_{i=1}^{I}\frac{d_{m,i}}{f_{rsu}}}\rfloor,
    		\label{eq-bgt}
    	\end{equation}
    	where $\lfloor \cdot \rfloor$ denotes the floor function.\par
    	To better utilize the generated images, we generate them based on an IID data distribution. This means that in each round, the server generates an equal number of images for the labels obtained through label sharing. The total number of generated images is calculated using equation (\ref{eq-bgt}).
   
    \subsection{Two-Scale Algorithm and Complexity Analysis}
	In this section, we first introduce the joint two-scale algorithm, and then analysis the complexity of the two-scale algorithm.
	\subsubsection{Joint Two-scale Algorithm}
	In Algorithm \ref{alg-twoscale}, $t$ is initially set to 0. First, the RSU applies the labels sharing and vehicle selection strategy in a large communication scale by solving $\mathcal{SUBP}1$. Based on the selected vehicles, $\mathcal{P}'$ is then solved in a smaller computation scale using the block coordinate descent (BCD) approach. The optimal bandwidth allocation $\boldsymbol{l}_n^{t,i}$ is obtained by fixing $\phi_n^{t,i-1}$ and $b_g^{t,i-1}$. The optimal transmission power $\phi_n^{t,i}$ is computed with $\boldsymbol{l}_n^{t,i}$ and $b_g^{t,i-1}$ given. Then, $b_g^{t}$ is directly calculated based on $\boldsymbol{l}_n^{t,i}$ and $\phi_n^{t,i}$. The loop continues until the differences meet the threshold requirements $\epsilon_1$, $\epsilon_2$, and $\epsilon_3$.

	\subsubsection{Complexity Analysis}
	The complexity of Algorithm \ref{alg-twoscale} mainly composed with four subproblems in two scale. In large-communication scale, the complexity of $\mathcal{SUBP}$1 is $\mathcal{O}(N)$. In small-computation scale, there are three subproblems using BCD method. The complexity of $\mathcal{SUBP}$2 using Lagrange Multiplier Method is $\mathcal{O}(M^{3.5}\log(\frac{1}{\epsilon_1}))$ according to\cite{bomze2010interior}, where $M$ is the number of variables. The complexity of $\mathcal{SUBP}$3 using SCA method is $\mathcal{O}(I_{SCA}M^3)$, where $I_{SCA}$ is the iteration of SCA. The complexity of $\mathcal{SUBP}4$ is $\mathcal{O}(1)$. Therefore, the overall computation complexity in Small-Computation scale is $\mathcal{O}(I_{BCD}M^{3.5}\log(\frac{1}{\epsilon_1}) + I_{BCD}I_{SCA}M^3)$, where $I_{BCD}$ is the number of iterations of BCD algorithm.	So, the joint Two-scale algorithm is $\mathcal{O}(N+I_{BCD}M^{3.5}\log(\frac{1}{\epsilon_1})+I_{BCD}I_{SCA}M^3)$.

\section{Performance Evaluation}
\label{section6}
In this section, we evaluate the performance of the proposed GenFV scheme and resource management algorithm through comprehensive simulations.

\subsection{Simulation Setup}
\subsubsection{Setting for FL Training and Datasets}
 We utilize the ResNet-18 model for the image classification task in FL, with a learning rate set to $1 \times 10^{-4}$ and a batch size of 64. Our simulation leverages three distinct image classification datasets: (1) the CIFAR-10 dataset \cite{krizhevsky2009learning}, containing colored images categorized into 10 classes such as "Airplane" and "Automobile". CIFAR-10 dataset comprises a training set with 50,000 samples and a testing set with 10,000 samples. (2) the CIFAR-100 dataset \cite{krizhevsky2009learning}, containing colored images categorized into 100 classes such as "Apple" and "Dolphin". Each class has exactly 500 training samples and 100 test samples, making a total of 50,000 training samples and 10,000 test samples for performance evaluation. (3) the GTSRB dataset contains 43 classes of traffic signs, divided into 39,209 training images and 12,630 test images. These images feature various lighting conditions and diverse backgrounds \cite{Houben-IJCNN-2013}. \par
 Notably, data distribution at vehicles is non-IID, which widely exists in practical systems. We realize data heterogeneity using Dirichlet distribution with parameter $\alpha$, where lower $\alpha$ results into more heterogeneous data partitions. The number of vehicles in the coverage follows a Possion Distribution. 

        \begin{figure}[t]
		\centering
		\subfloat[Training Loss]{\includegraphics[width=0.40\textwidth]{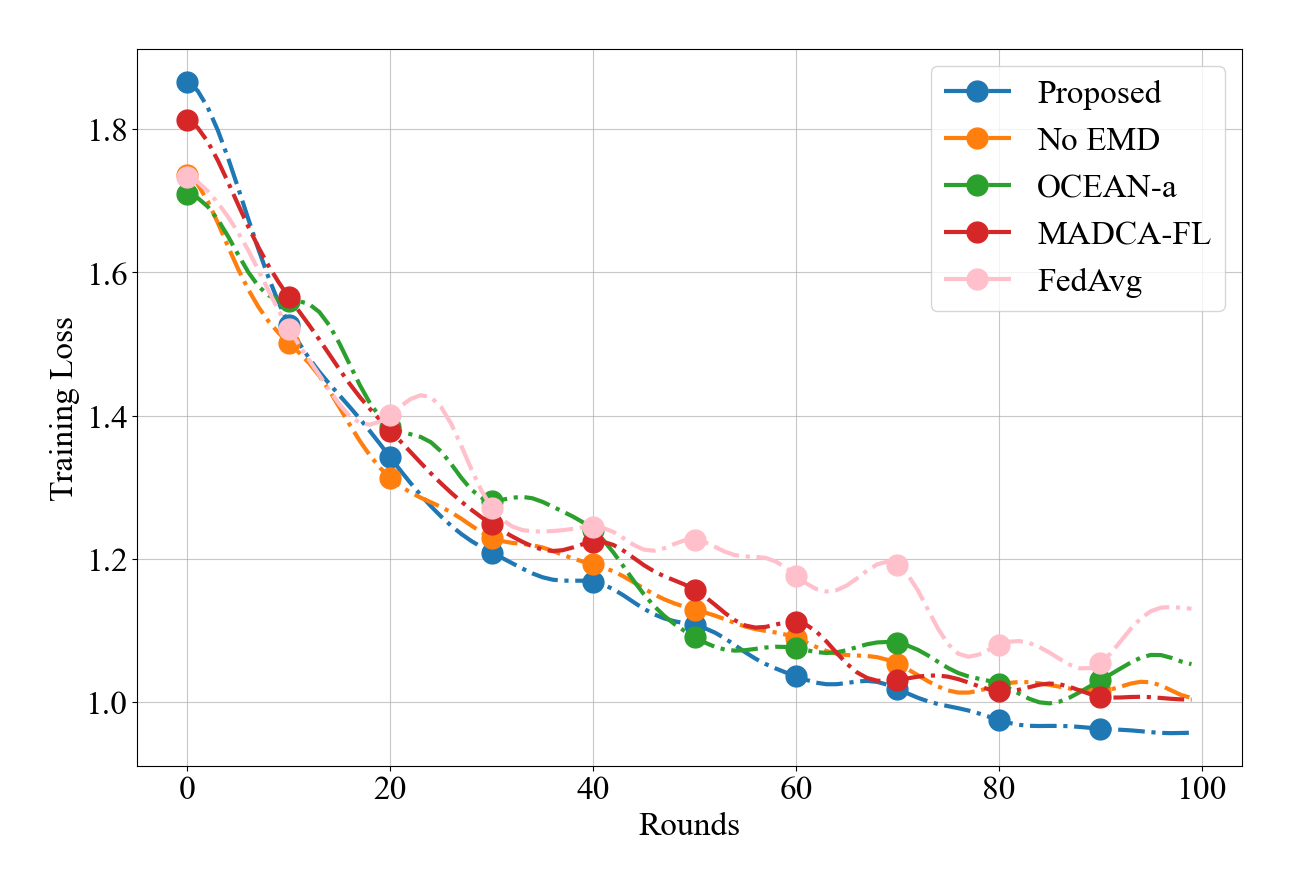}\label{trainloss}}\hfill
		\subfloat[Testing Accuracy]{\includegraphics[width=0.40\textwidth]{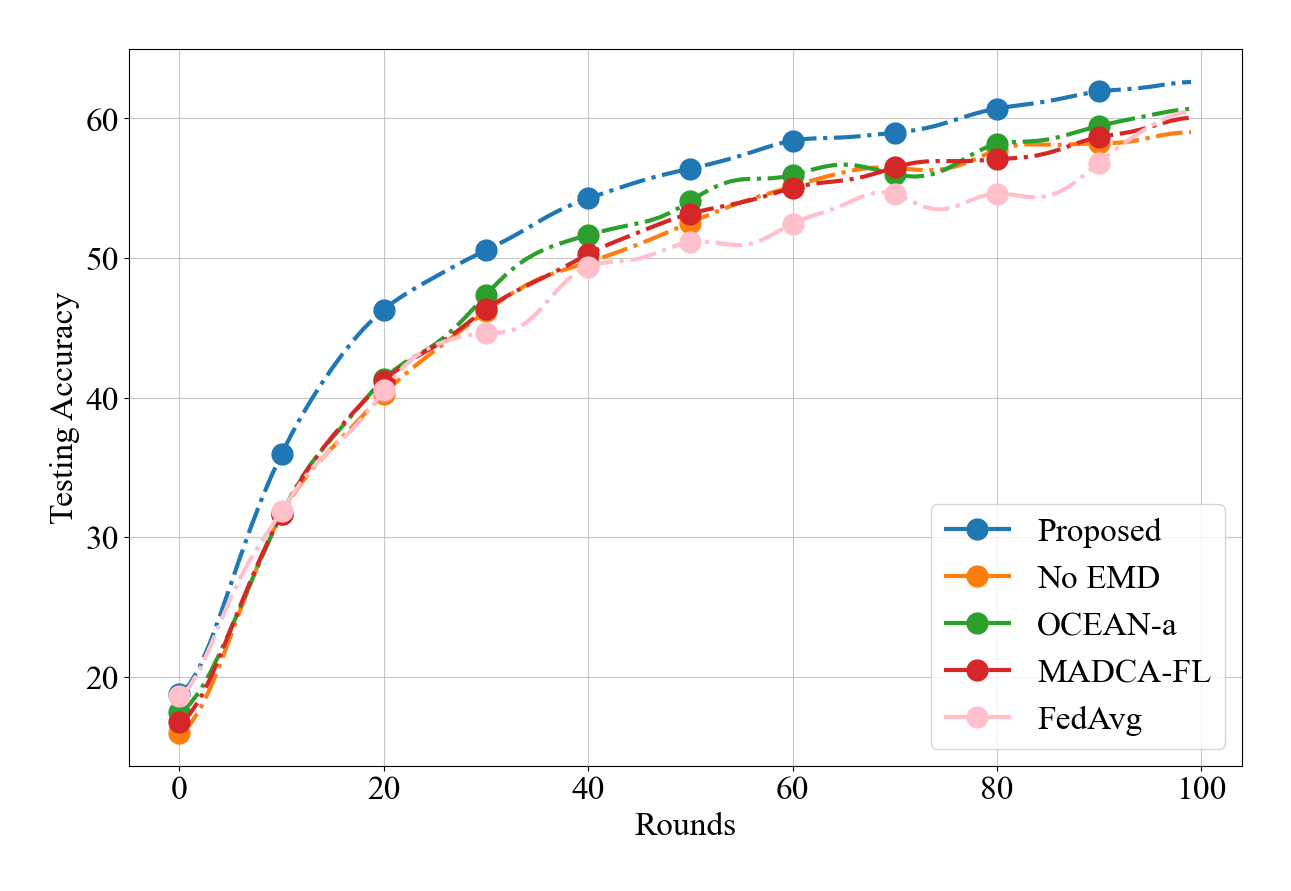}\label{testacc}}
		\caption{Training Loss and Testing Accuracy on CIFAR-10 with different vehicle selection strategy.}
		\label{fig:cifar10-selection-compare}
	\end{figure}	
  
	\begin{figure*}[t]
		\centering
		\begin{minipage}[t]{0.32\linewidth}  
			\centering
			\includegraphics[width=1.0\textwidth]{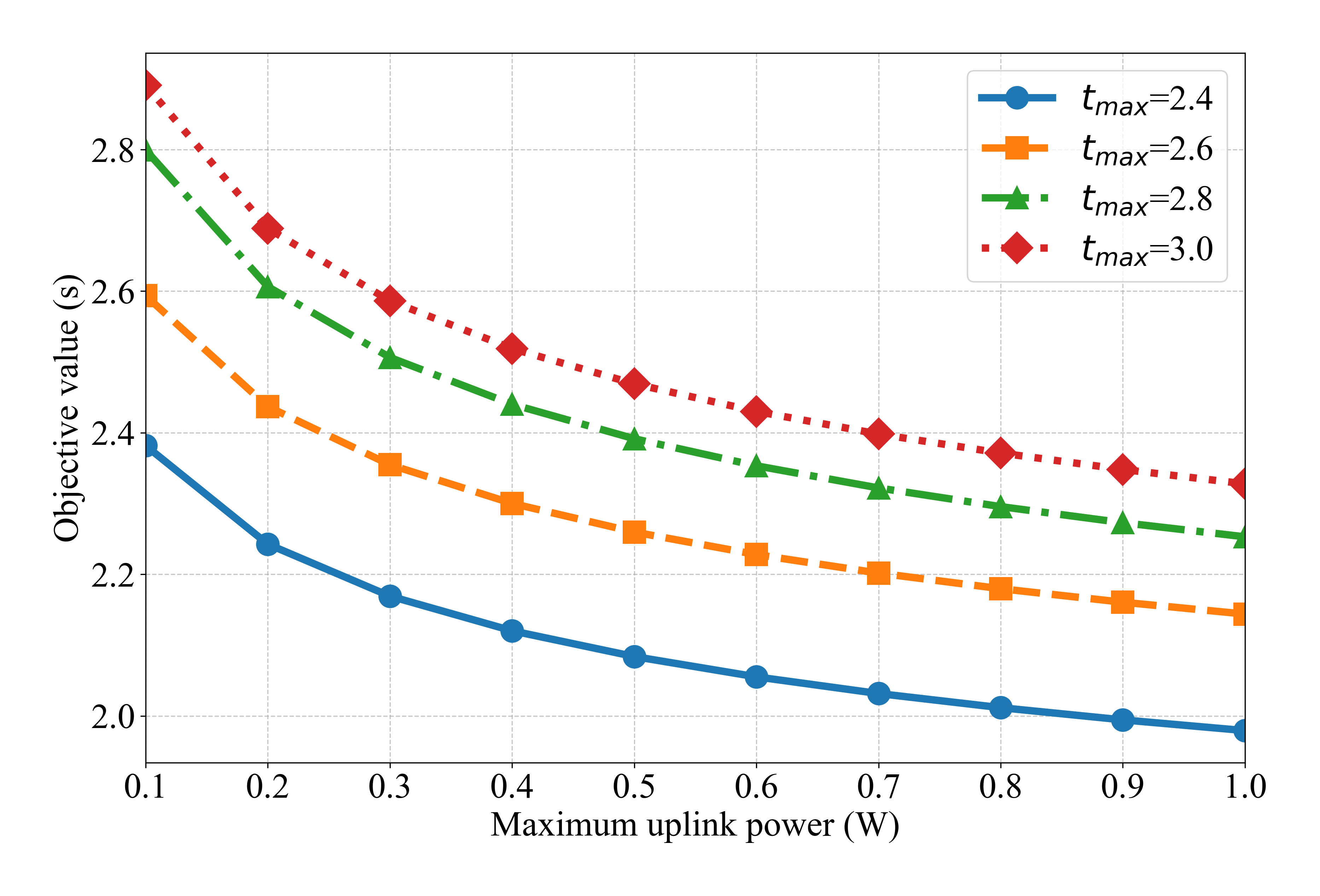} 
			\caption{Objective Value vs. Maximum Uplink Power under different $t_{max}$ constraint.}
			\label{fig:convergencetmax} 
		\end{minipage}
		% \hfill
		\begin{minipage}[t]{0.32\linewidth}
			\centering
			\includegraphics[width=1.0\textwidth]{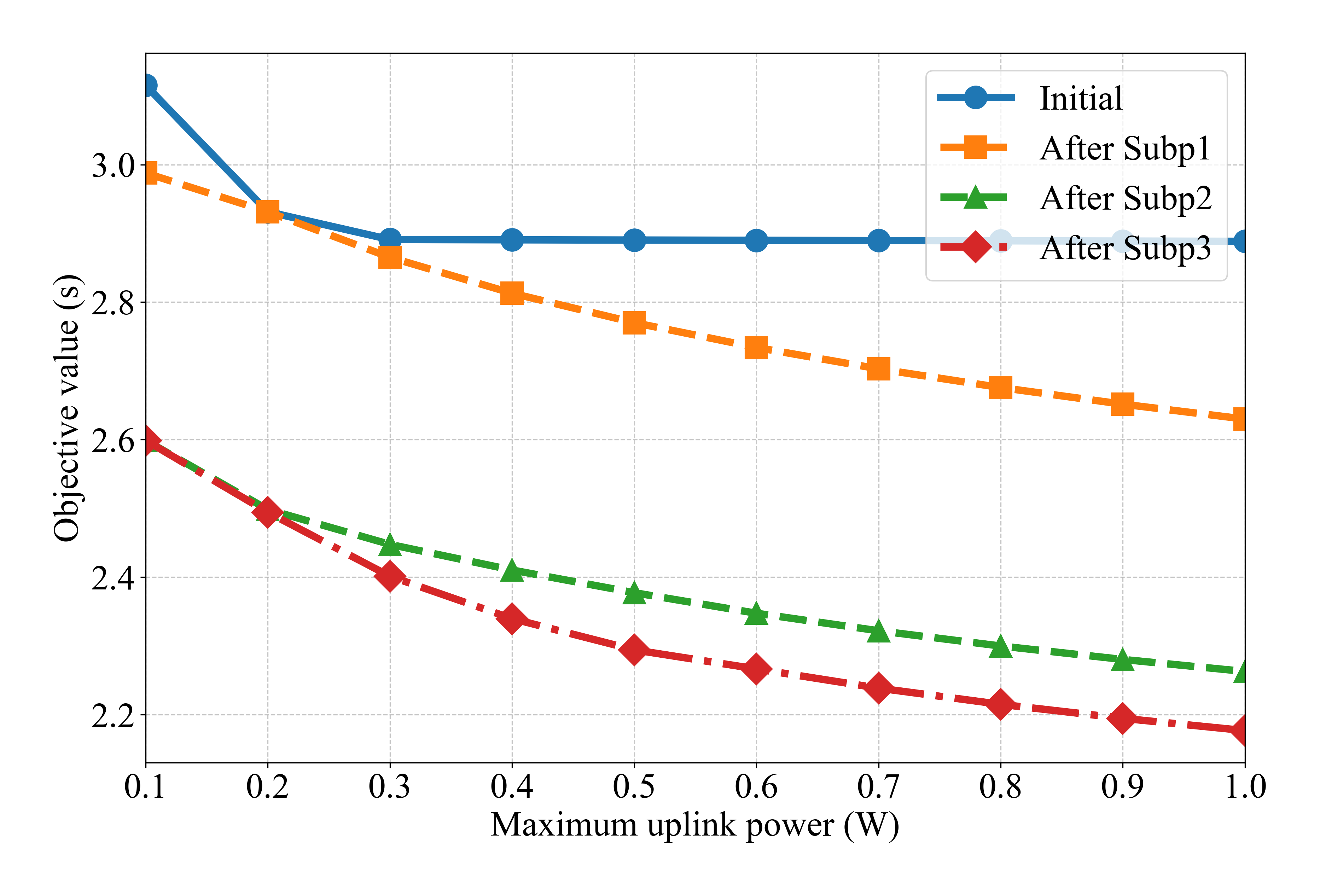} 
			\caption{Objective Value vs. Maximum Uplink Power under different selection and optimization strategy.}
			\label{fig:convergenceselection} 
		\end{minipage}
		% \hfill
		\begin{minipage}[t]{0.32\linewidth}
			\centering
			\includegraphics[width=1.0\textwidth]{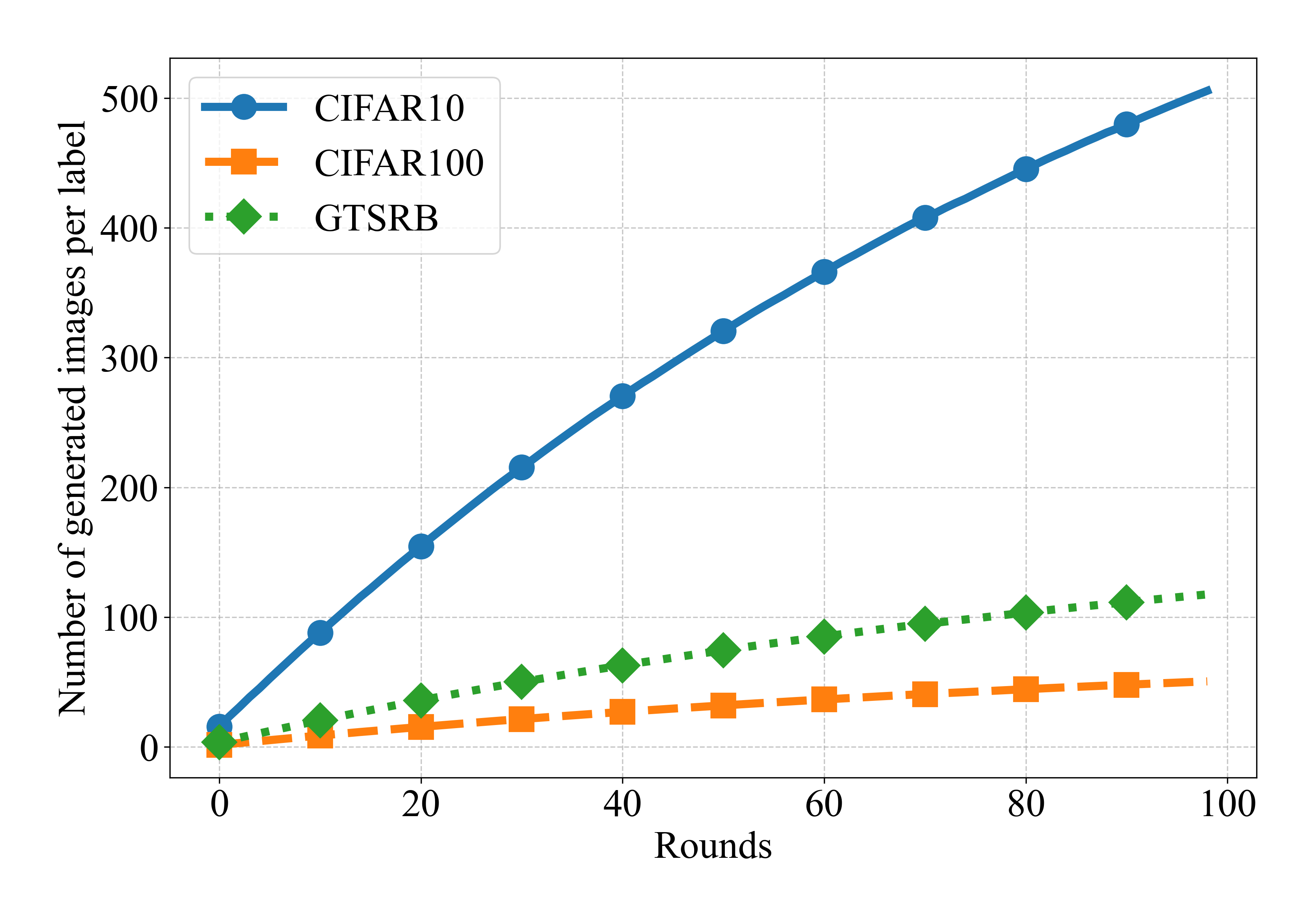} 
			\caption{The mount of generated images with different dataset.}
			\label{fig:generated_images} 
		\end{minipage}
        \end{figure*}

\subsubsection{Setting for Images Generation}
We employ a diffusion model\cite{ho2020denoising,10557146,li2024filling} for image generation, where generating a batch of 64 images sized 32x32 under default setting. Some of the generated images is shown in Fig. \ref{fig:some_generated_images}, which are visually good.
\subsubsection{Setting for Communication and Computation Model}
The GPU memory frequency of vehicles is range from $1250$ MHz to $1750$ MHz and the GPU core frequency is range from $1000$ MHz to $1600$ MHz. The number of transmission power $\phi_n$ for vehicle $n$ to process one sample data is randomly setting from $0.1$W to $1$W. The transmission power of RSU is $40$ dBm. The noise power $\sigma_0^2$ is -174 dBm. There are $M = 20$ subcarriers to be assigned, and every sub channel bandwidth is $10$ MHz.
\subsubsection{Setting for \texorpdfstring{$\hat{EMD}$}{EMD} Value}
The EMD constraint (\ref{emd_cons}) plays a crucial role in effectively managing the heterogeneity of vehicle data and ensuring robust model performance across diverse data distributions. Fig. \ref{fig:emd} illustrates the EMD values for vehicles with varying Dirichlet distributions under different datasets. Based on these observations, we establish a threshold $\hat{EMD}$ to mitigate excessive heterogeneity among vehicles, with the specific values detailed in Table \ref{tab:simple_4x4}.
\begin{table}[ht]
    \centering
    \caption{The value of \texorpdfstring{$\hat{EMD}$}{EMD} with different \texorpdfstring{$\alpha$}{alpha} and datasets.}

    \begin{tabular}{lcccc}
        \toprule
        \textbf{Dataset} & \textbf{$\alpha=0.1$} & \textbf{$\alpha=0.3$} & \textbf{$\alpha=0.5$} & \textbf{$\alpha=1.0$} \\
        \midrule
        CIFAR10 & 1.5 & 1.2 & 1.0 & 0.8 \\
        CIFAR100 & 1.5 & 1.2 & 1.0 & 0.8 \\
        GTSRB & 1.5 & 1.3 & 1.2 & 1.0 \\
        \bottomrule
    \end{tabular}
    \label{tab:simple_4x4}
\end{table}

    \begin{figure*}[t]
    	\centering
    	\subfloat[Dir($\alpha=0.1$)]{\includegraphics[width=0.23\textwidth]{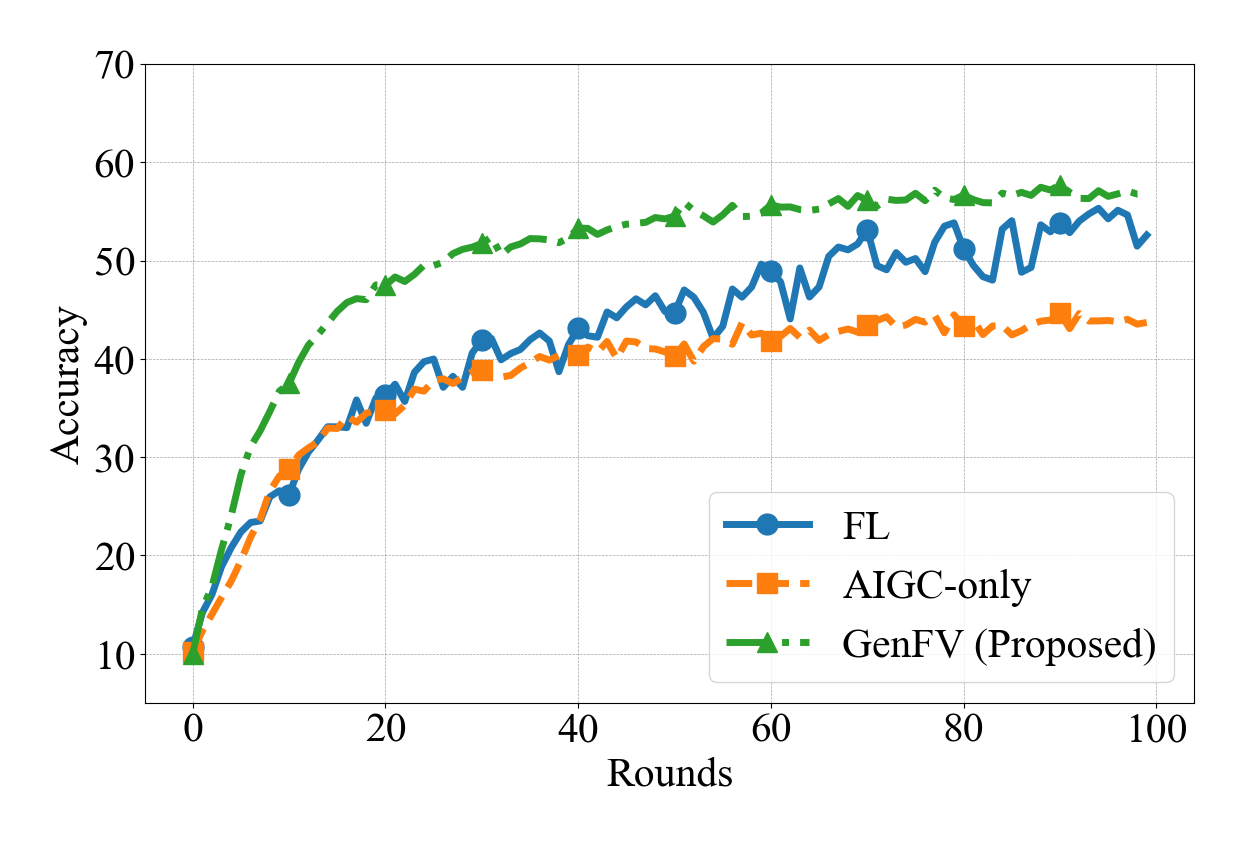}}
    	\hfill
    	\subfloat[Dir($\alpha=0.3$)]{\includegraphics[width=0.23\textwidth]{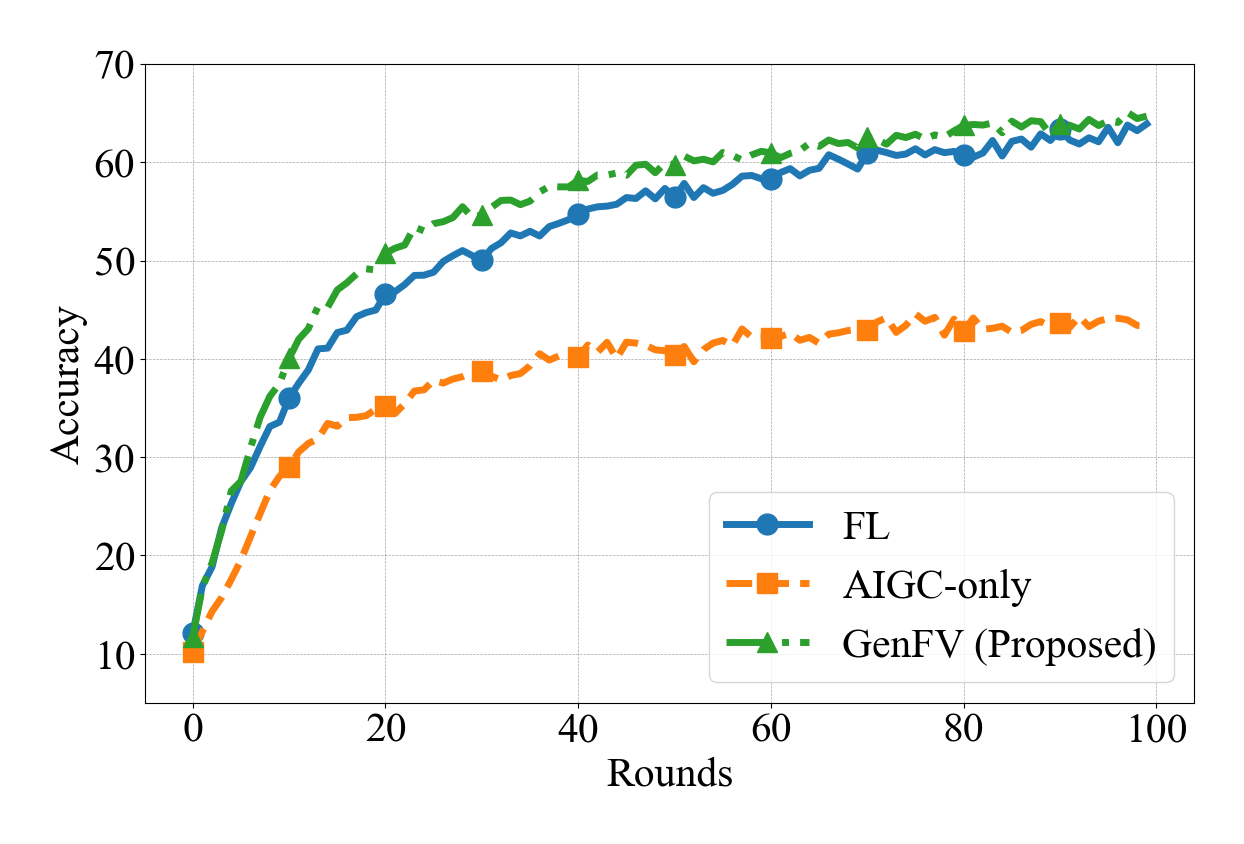}}
    	\hfill
    	\subfloat[Dir($\alpha=0.5$)]{\includegraphics[width=0.23\textwidth]{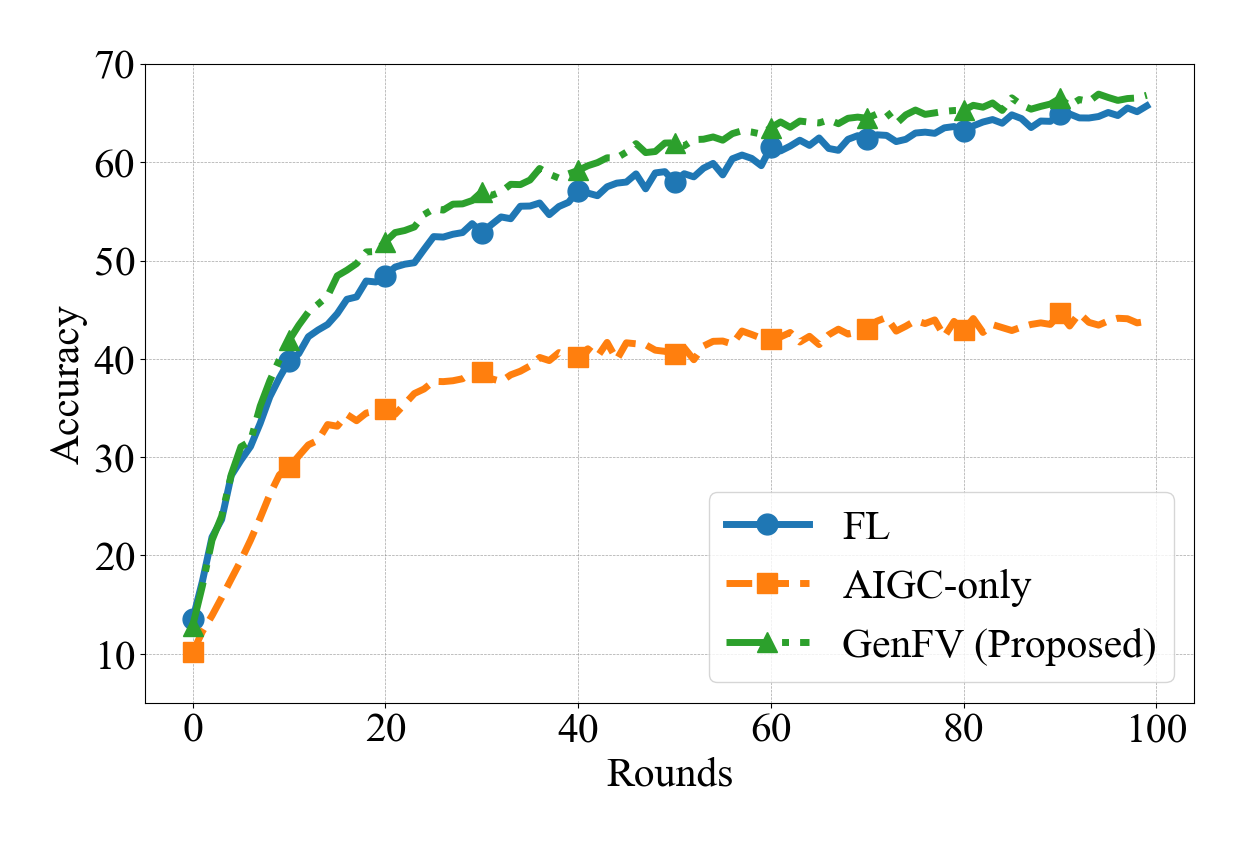}}
    	\hfill
    	\subfloat[Dir($\alpha=1.0$)]{\includegraphics[width=0.23\textwidth]{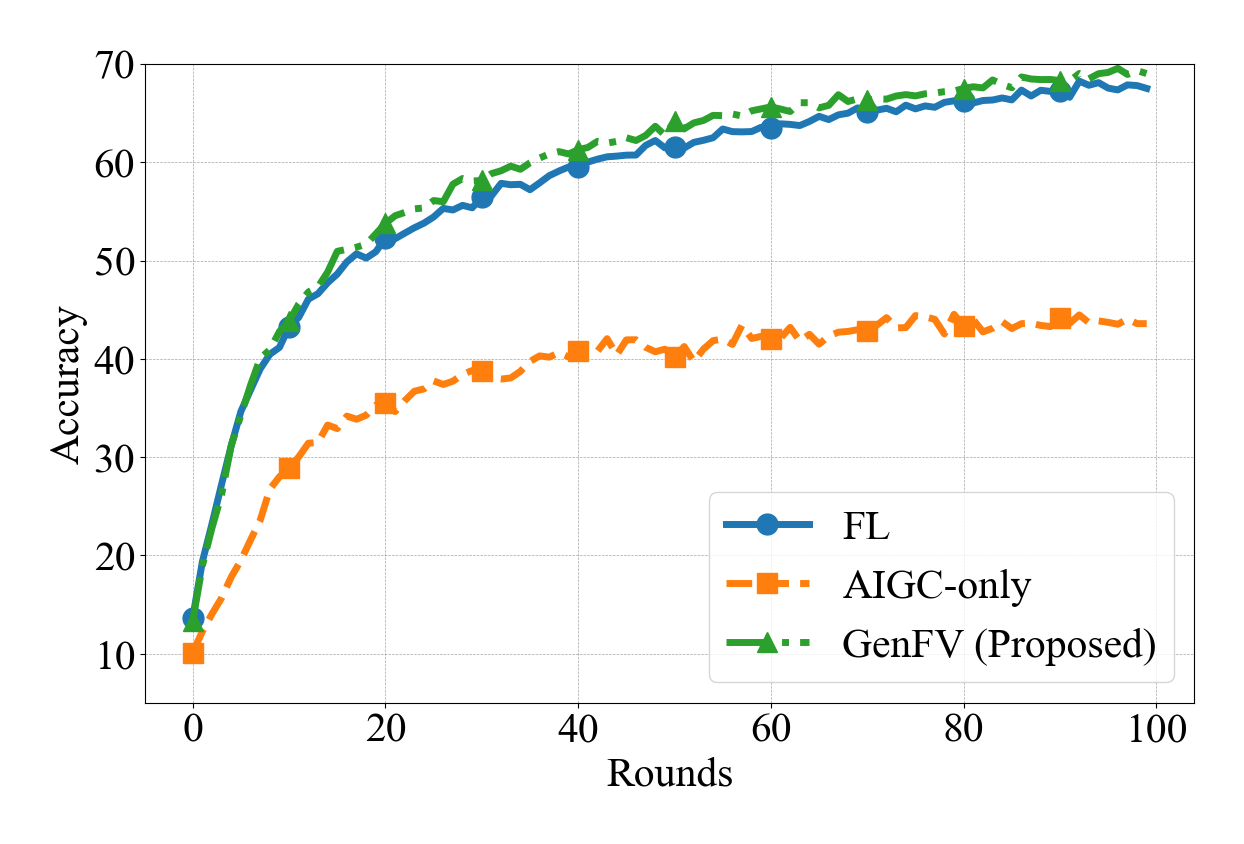}}
    	\caption{Accuracy on CIFAR-10 with different Dirichlet distribution.}
    	\label{fig:cifar10-acc}
    \end{figure*}
    \begin{figure*}[t]
        \centering
        \subfloat[Dir($\alpha=0.1$)]{\includegraphics[width=0.23\textwidth]{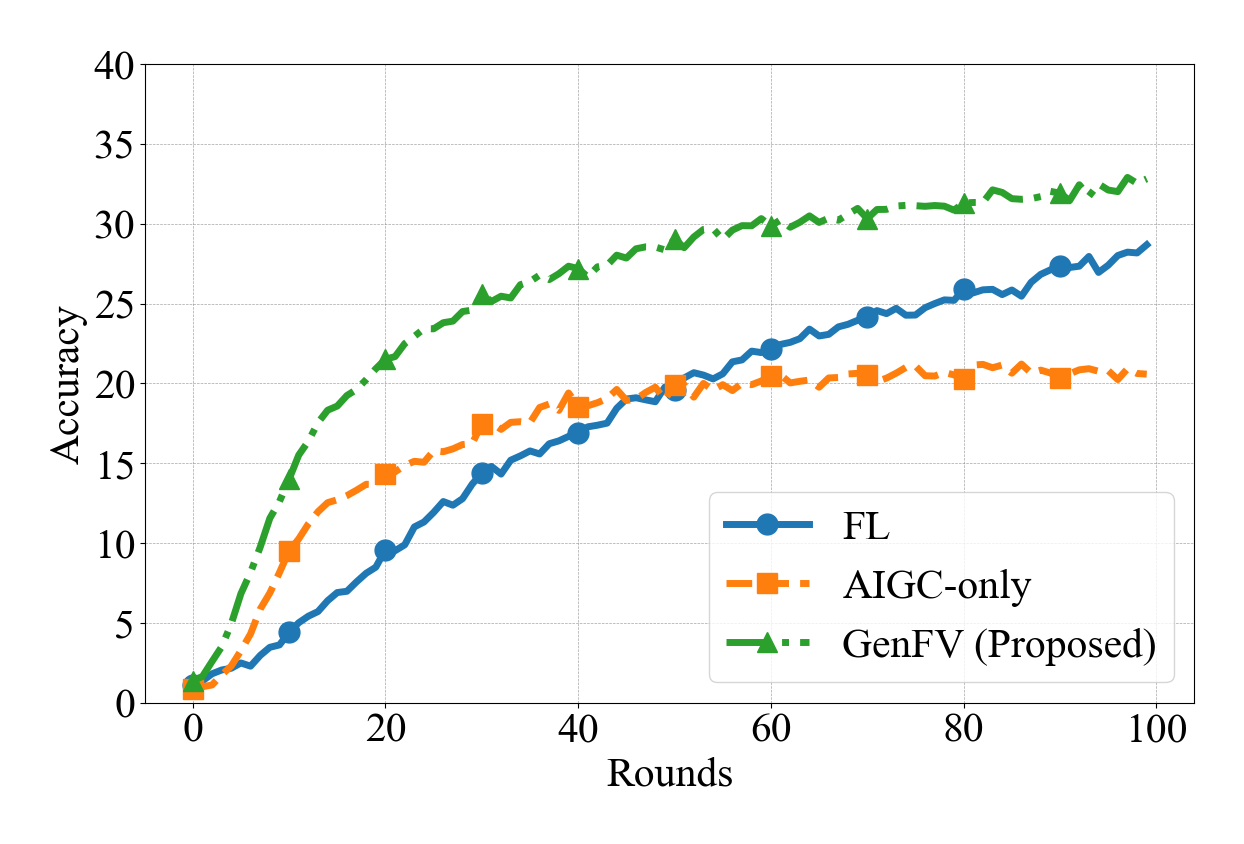}}
        \hfill
        \subfloat[Dir($\alpha=0.3$)]{\includegraphics[width=0.23\textwidth]{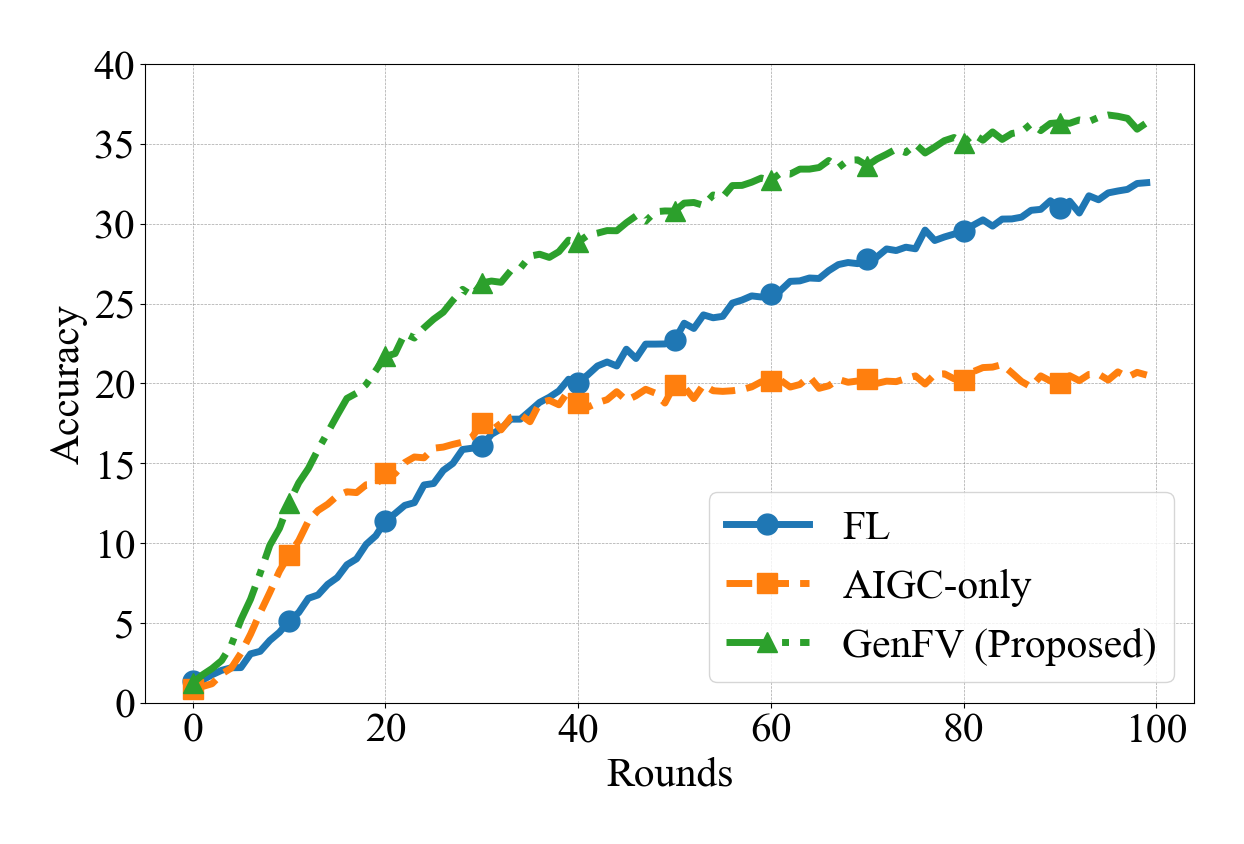}}
        \hfill
        \subfloat[Dir($\alpha=0.5$)]{\includegraphics[width=0.23\textwidth]{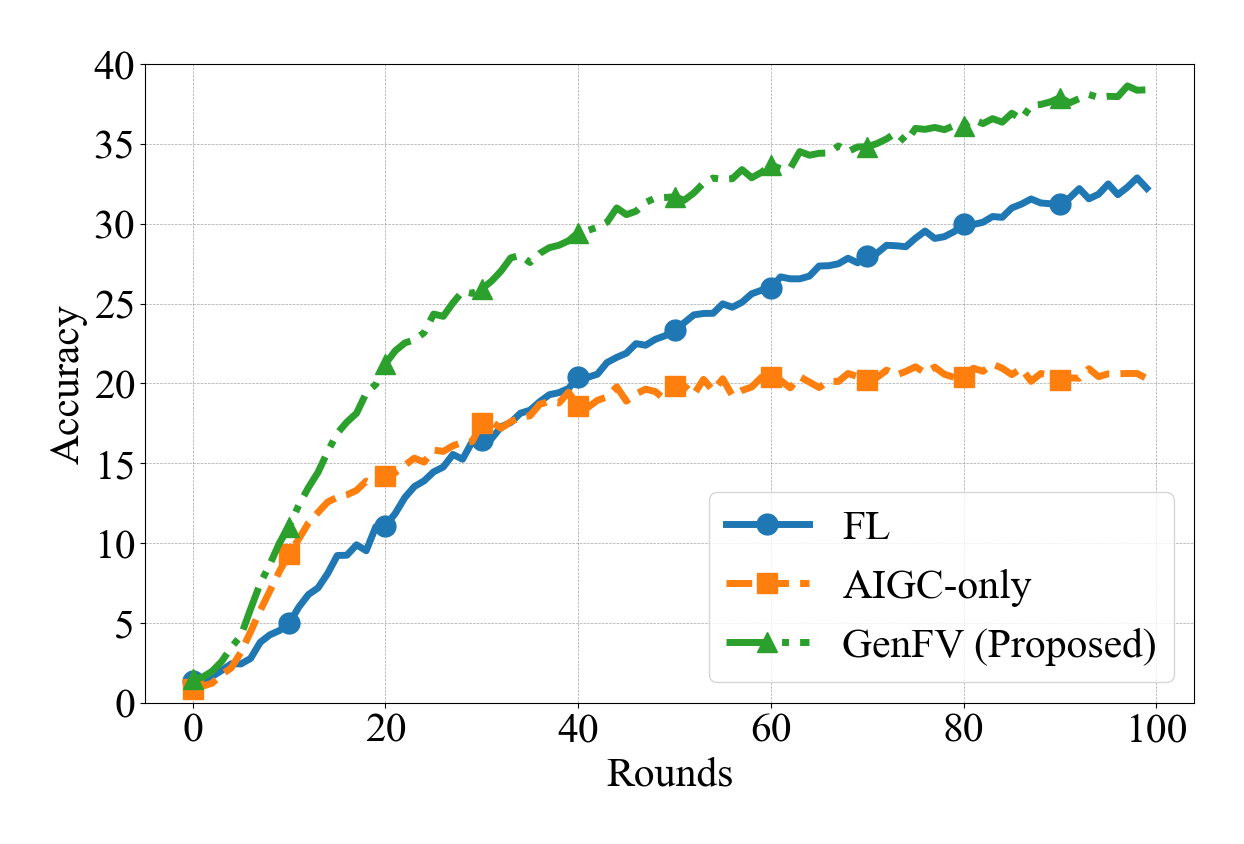}}
        \hfill
        \subfloat[Dir($\alpha=1.0$)]{\includegraphics[width=0.23\textwidth]{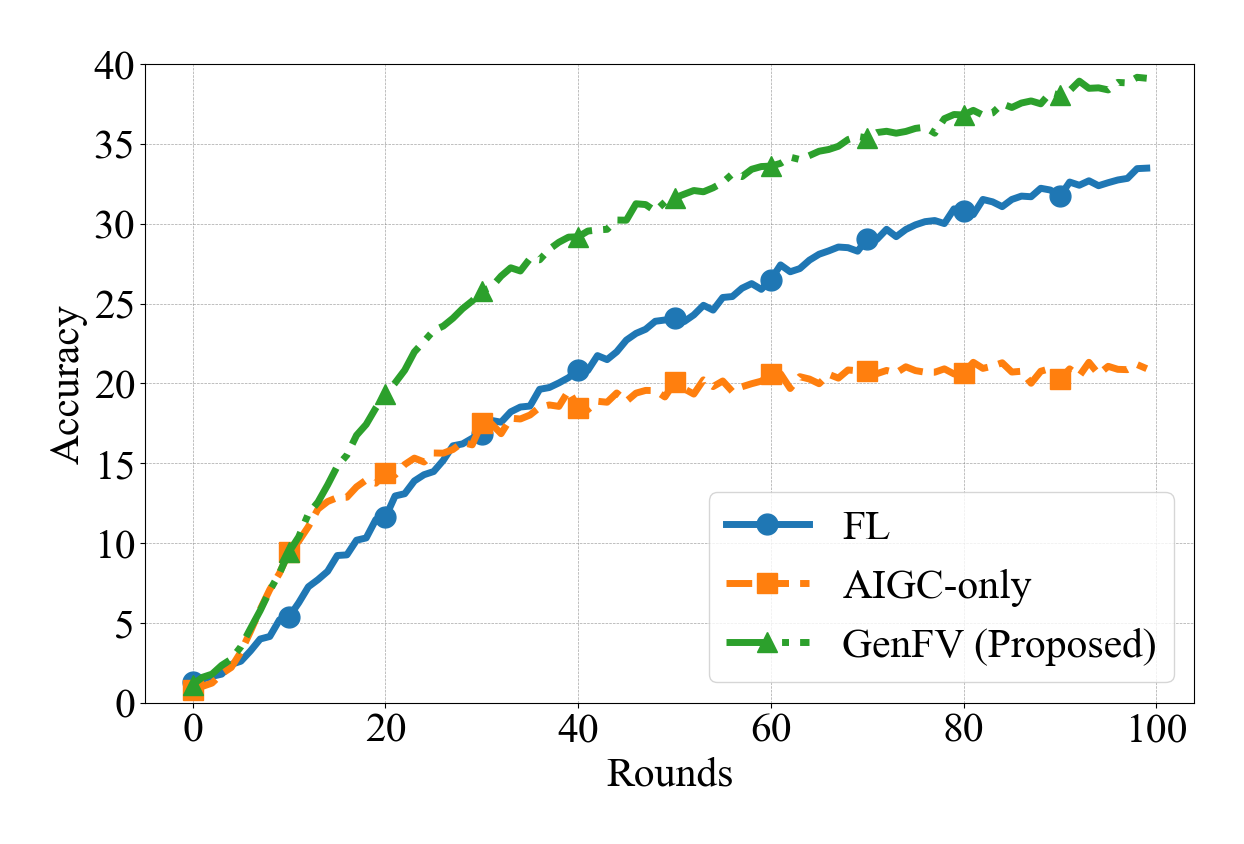}}
        \caption{Accuracy on CIFAR-100 with different Dirichlet distribution.}
        \label{fig:cifar100-acc}
    \end{figure*}	
    \begin{figure*}[t]
        \centering
        \subfloat[Dir($\alpha=0.1$)]{\includegraphics[width=0.23\textwidth]{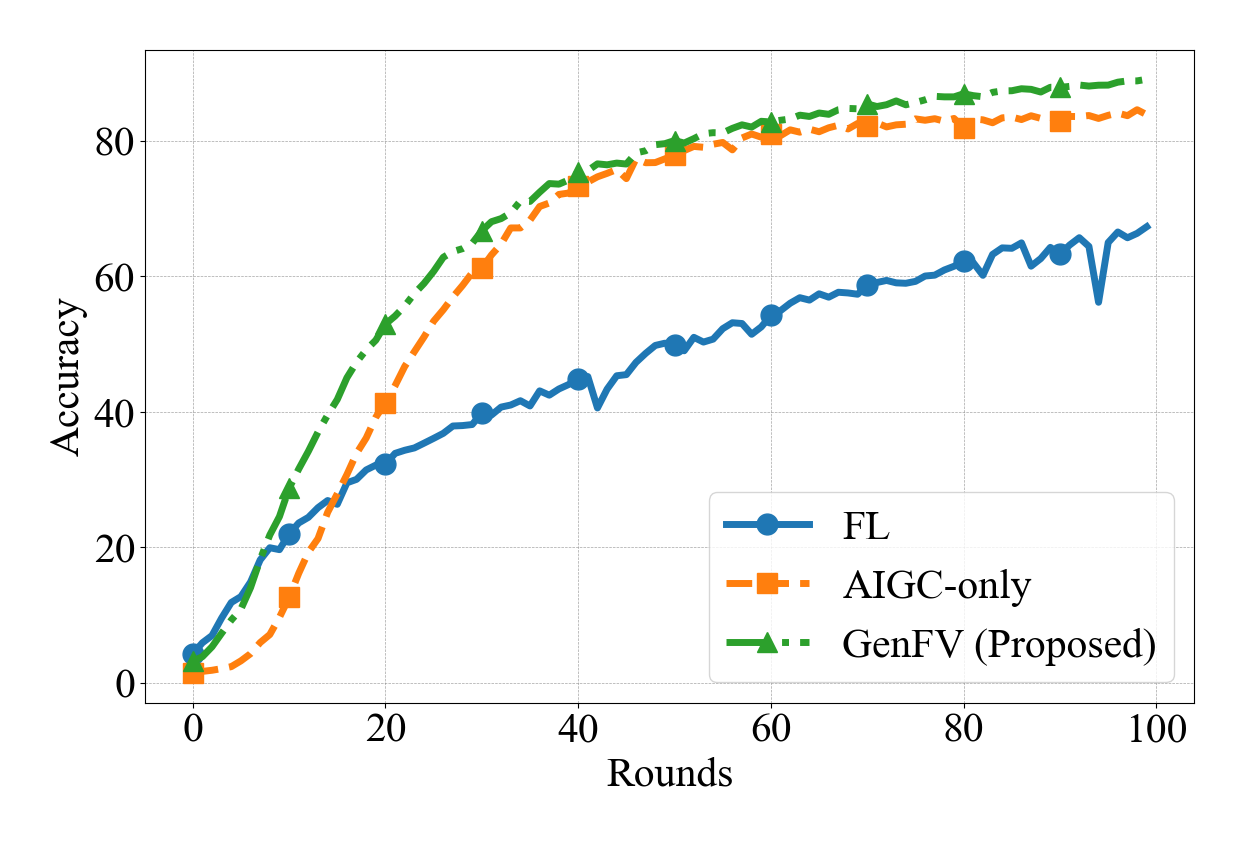}}
        \hfill
        \subfloat[Dir($\alpha=0.3$)]{\includegraphics[width=0.23\textwidth]{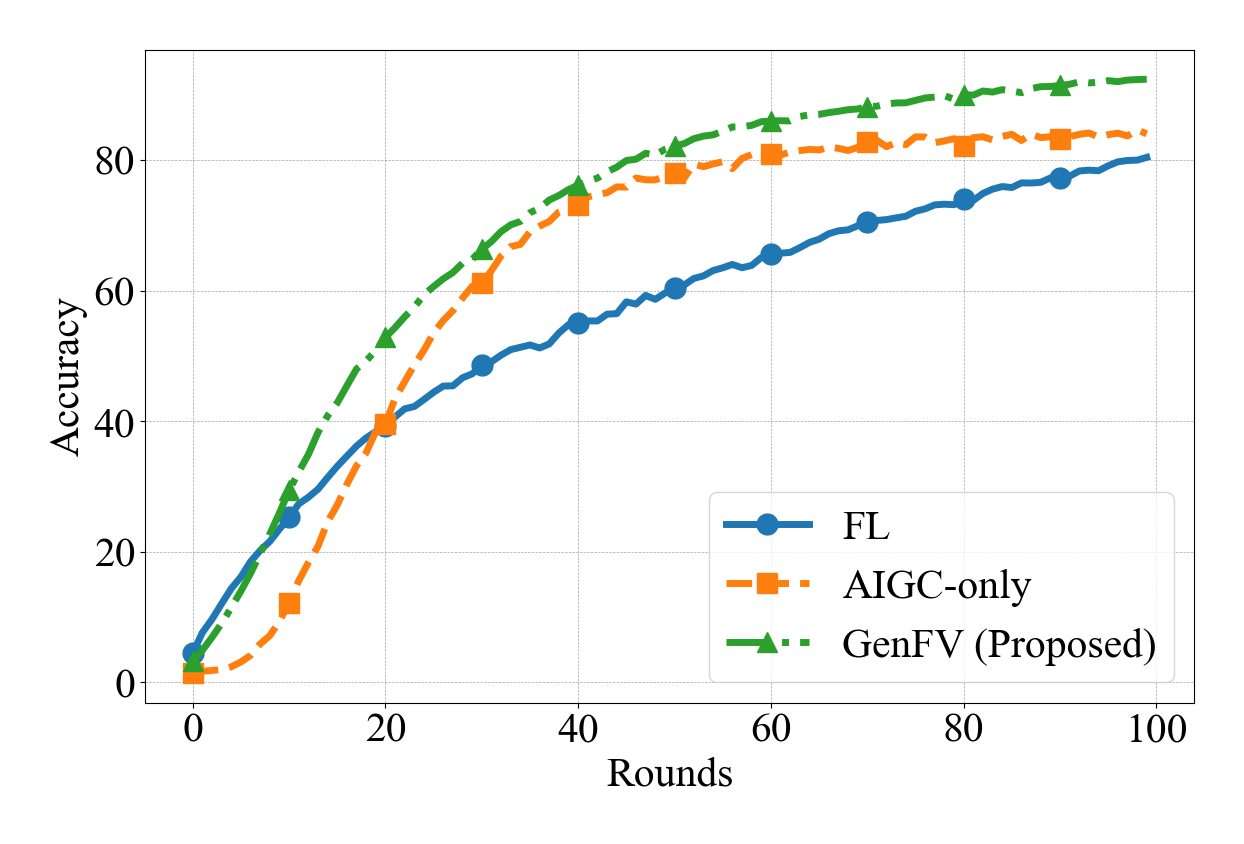}}
        \hfill
        \subfloat[Dir($\alpha=0.5$)]{\includegraphics[width=0.23\textwidth]{images/test_acc_GTSRBds_64bs_0.0001lr_5ep_100r_100u_0.3a_new.png}}
        \hfill
        \subfloat[Dir($\alpha=1.0$)]{\includegraphics[width=0.23\textwidth]{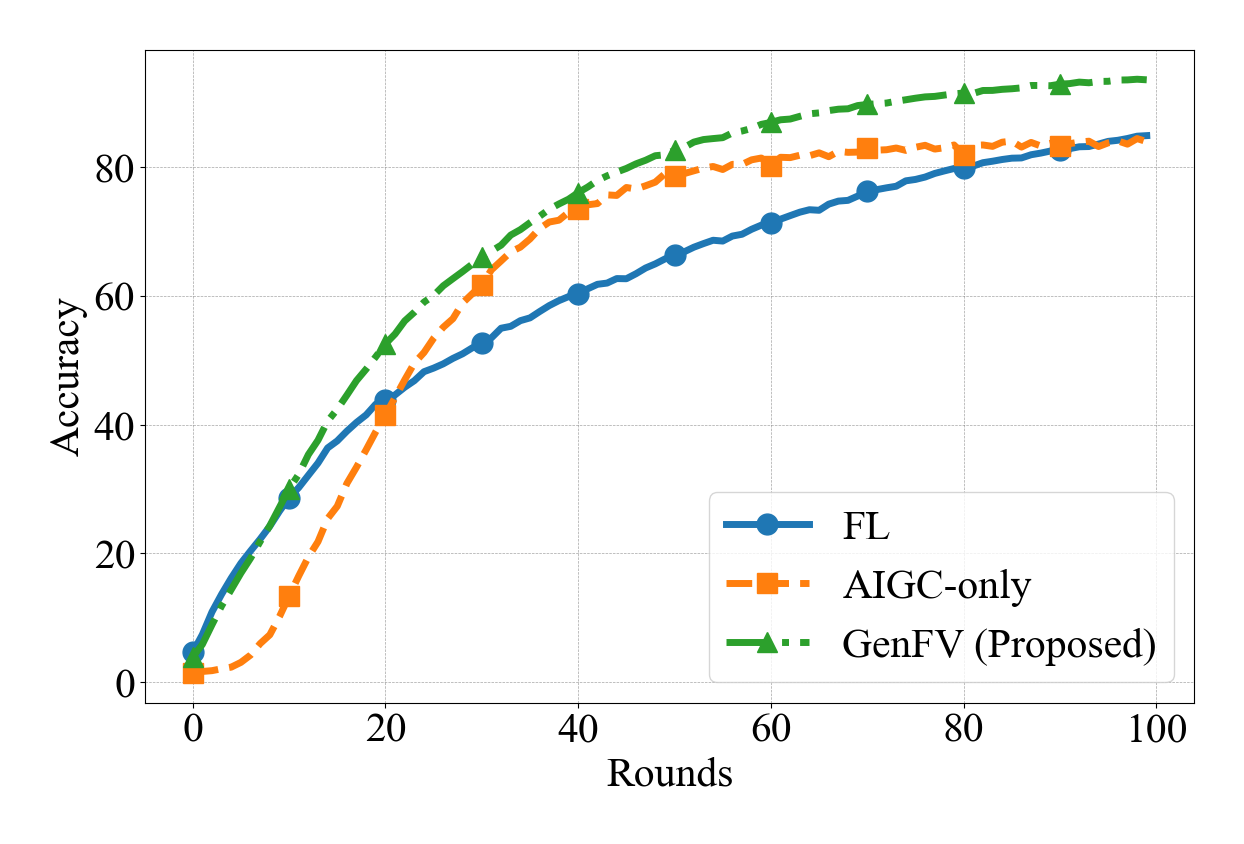}\label{fig:gtsrb-1.0}}
        \caption{Accuracy on GTSRB with different Dirichlet distribution.}
        \label{fig:gtsrb-acc}
    \end{figure*}
\subsection{Resource Allocation Algorithm Evaluation}
Fig. \ref{fig:cifar10-selection-compare} shows the vehicle selection strategy. Four baselines are considered for comparison with GenFV.
\begin{itemize}
    \item FedAvg\cite{mcmahan2017communication}: The server randomly selects vehicles and aggregate local models for global model.
    \item No EMD: The server selects vehicles only with constraint Eqn. (\ref{emd_cons}).
    \item OCEAN-a\cite{9237168}: A long-term client selection and bandwidth allocation problem under finite energy constraints of individual clients.
    \item MADCA-FL\cite{10144680}: A vehicle selection strategy that leverages the success probability (more than 80$\%$) of highly mobile vehicles in completing the training process.
\end{itemize}
Fig. \ref{fig:cifar10-selection-compare} shows FL training loss and testing accuracy with different vehicle selection. These two figures show that all schemes become convergent as the training round increases. We also find that proposed and the other three baselines outperform FedAvg, since the features of vehicles and environments are taken into account in these selection strategies. Compared with No EMD method, show that the necessary of give up some too heterogeneous vehicles. Particularly, we consider the high mobility, dynamics of wireless channel characteristics and data heterogeneity in our proposed method, since vehicles may not complete the training within the allowed latency, which consequently reduces the number of participating vehicles and raise the uncertainty of the FL performance. Besides, it can be observed that proposed selection strategy is superior than No EMD method, because we constraint the heterogeneity of participants' data which can achieve more stable and better results. In addition,  the MADCA-FL method consider the transmission successful probability to avoid the interruption during training is like the No EMD method, neglecting different data distribution of vehicles, caused less performance. OCEAN-a follows an empirical "later-is-better" convergence phenomenon, with increasing selected number of vehicles, it can achieve better performance than Fedavg. These comparisons reveal that proposed vehicle selection strategy achieves the best convergence performance, which demonstrates the effectiveness of our proposed scheme.

In the Fig. \ref{fig:convergencetmax}, each curve represents system performance with different maximum delay constraints ($t_{max}$). As the maximum uplink power increases, the objective value decreases, indicating the reduced delay. Lower $t_{max}$ values result in significantly better performance, achieving lower delays.

Fig. \ref{fig:convergenceselection} shows the impact of after solving different sub problems in the two-scale algorithm when $t_{max}$ is 3.0 seconds. Initially, the objective value is high, but it significantly decreases after solving each sub problem. Particularly after completing $\mathcal{SUBP}$1, $\mathcal{SUBP}$2 and $\mathcal{SUBP}$3, the system delay reaches its lowest. This demonstrates the effectiveness of the two-scale algorithm in optimizing performance through its iterative process.

Fig. \ref{fig:generated_images} illustrates the cumulative number of generated images per label. In each round, the total number of generated images remains consistent under the same wireless conditions. However, CIFAR10 has 10 labels, CIFAR100 has 100 labels, and GTSRB has 43 labels, resulting in a higher number of labels with fewer images per label. As the cumulative total increases, the training time for the augmented model also rises, leading to a decrease in the time allocated for image generation, which ultimately slows the rate of increase.

\subsection{GenFv performance under different non-IID data distribution}

As illustrated in Fig. \ref{fig:cifar10-acc} and Fig. \ref{fig:cifar100-acc}, the proposed GenFV method consistently achieves the highest accuracy across various settings. In contrast, models trained solely on AIGC-generated datasets perform the worst, primarily due to the distributional differences between AIGC-generated and real images. FL based exclusively on local data yields poorer results, mainly due to data heterogeneity. As the parameter $\alpha$ increases and data heterogeneity decreases, the accuracy of FL steadily improves. The results indicate that the proposed method achieves faster convergence at smaller $\alpha$ values. However, this acceleration diminishes as data heterogeneity decreases. We also observe that models trained on data generated by AIGC converge quickly, but still experience bottlenecks in terms of accuracy. This suggests that the capability of AIGC in image generation has significant room for improvement. Furthermore, on the CIFAR-100 dataset, compared to the CIFAR-10 dataset, both the GenFV method and FL show an approximately $10\%$ increase in accuracy at $\alpha=1.0$ compared to $\alpha=0.1$. This clearly reflects the impact of data heterogeneity on federated learning.\\
Fig. \ref{fig:gtsrb-acc} shows the accuracy of the proposed scheme on GTSRB dataset. In these three figures, we can observe that the dataset generated by AIGC-only performs better. Therefore, GenFV benefits significantly from this model augmentation, resulting in remarkable model performance, especially in highly heterogeneous local data scenarios ($\alpha=0.1$). However, as the number of rounds increases, it is predictable that FL will eventually surpass AIGC-only in Fig. \ref{fig:gtsrb-acc}\subref{fig:gtsrb-1.0}. Overall, GenFV combines the advantage of rapid convergence from AIGC-only with the sustained learning capability of FL.

\section{Conclusion}
\label{section7}
	In this paper, we propose a novel framework called AIGC-assisted Federated Learning for Vehicular Edge Intelligence (GenFV). Our approach leverages the diffusion model to address the non-IID data challenge commonly faced in FL. By incorporating the diffusion model into the FL process, we enhance the diversity and representativeness of local datasets across vehicles, even when data distributions are highly skewed or incomplete. We then introduce an augmented FL convergence analysis and propose a weighted policy using the Earth Mover's Distance (EMD) to measure data distribution heterogeneity. Next, we analyze system delay and formulate a mixed-integer nonlinear programming (MINLP) problem to minimize it. To solve this NP-hard problem, we propose a two-scale algorithm. At the large communication scale, we implement labels sharing and select vehicles based on velocity and data heterogeneity. At the small communication scale, we use Successive Convex Approximation and Karush-Kuhn-Tucker conditions to optimally allocate bandwidth and transmission power. Extensive experiments demonstrate that GenFV significantly enhances the performance and robustness of FL in dynamic, resource-constrained environments.\par
   This work demonstrates the potential of integrating the AIGC and FL. Further research is needed to improve GenFV’s security and performance, particularly in strengthening privacy protection during label sharing and enhancing the quality of AIGC-generated images for better real-world adaptation.

\bibliographystyle{IEEEtran}
\bibliography{main}
\end{document}